\newcommand{\W}{\mathbf{W}}
\newcommand{\ones}{\mathbbm{1}_n}
\newcommand{\dmerror}{C{\varepsilon^{1/2} n^{1/4}}}
\renewcommand{\Pr}{\mathbb{P}}
\DeclareMathOperator*{\argmax}{argmax} %
\declaretheorem[name=Assumption]{ass}
\declaretheorem[name=Definition]{definition}
\declaretheorem[name=Lemma]{lemma}
\newcommand{\E}{\mathbb{E}}
\newcommand{\feat}{\ensuremath{\phi}}
\newcommand{\featsupp}{\ensuremath{\Phi}}
\newcommand{\modelout}{\ensuremath{f}}
\title{Rethinking Backdoor Attacks}
\author{
\begin{tabular}[t]{c c c c}
Alaa Khaddaj\thanks{Equal Contribution.} & Guillaume Leclerc\footnotemark[1] & Aleksandar Makelov\footnotemark[1] & Kristian Georgiev\footnotemark[1] \\
\texttt{alaakh@mit.edu} & \texttt{leclerc@mit.edu} & \texttt{amakelov@mit.edu} & \texttt{krisgrg@mit.edu} \\
MIT & MIT & MIT & MIT \\
\end{tabular}
\\[0.6in] %
\begin{tabular}[t]{c c c}
Hadi Salman & Andrew Ilyas & Aleksander M\k{a}dry \\
\texttt{hady@mit.edu} & \texttt{ailyas@mit.edu} & \texttt{madry@mit.edu} \\
MIT & MIT & MIT \\
\end{tabular}
}
\date{}
\begin{document}
    \maketitle

    \begin{abstract}
        In a {\em backdoor attack}, an adversary inserts maliciously constructed
backdoor examples into a training set to make the resulting model
vulnerable to manipulation. Defending against such attacks typically involves viewing these inserted examples as outliers in the training set
and using techniques from robust statistics to detect and remove
them.

In this work, we present a different approach to the backdoor attack problem. Specifically, we show that
without structural information about the training data distribution, backdoor
attacks are {\em indistinguishable} from naturally-occuring features in the data---and
thus impossible to ``detect'' in a general sense. Then, guided by this observation,
we revisit existing defenses against backdoor attacks and characterize the
(often latent) assumptions they make and on which they depend.
Finally, we explore an alternative perspective on backdoor attacks: one that assumes
these attacks correspond to the {\em strongest} feature in the training data.
Under this assumption (which we make formal) we develop a new primitive for
detecting backdoor attacks.  Our primitive naturally gives rise to a detection
algorithm that comes with theoretical guarantees and is effective in practice.
    \end{abstract}

    \section{Introduction}
    \label{sec:intro}
    A {\em backdoor attack}~\citep{gu2017badnets,chen2017targeted,adi2018turning, shafahi2018poison,
turner2019label} allows an adversary to manipulate
the predictions of a (supervised) machine learning
model by modifying a small fraction of the training set inputs.
This involves adding a fixed
pattern (called the ``trigger'') to some training inputs,
and setting the labels of these inputs to some fixed value $y_b$.
This intervention enables the adversary to take control of the resulting models'
predictions at deployment time by adding the trigger to inputs of interest.

Backdoor attacks pose a serious
threat to machine learning systems as they are easy to deploy
and hard to detect.
Indeed, recent work has shown
that modifying even a very small number of
training inputs
suffices for mounting a successful backdoor attack on models trained on web-scale
datasets~\citep{carlini2023poisoningwebscale}.
Consequently, there is a growing body of work on backdoor attacks and approaches to
defending against them
\citep{chen2018detecting,tran2018spectral,jin2021poisonselfexpansion,hayase2021spectre,
levine2021deeppartition, jia2021intrinsic}.

A prevailing perspective on defending against backdoor attacks treats the
manipulated training inputs as {\em outliers}, and thus draws a parallel between
backdoor attacks and the classic {\em data poisoning} setting from robust statistics.
In data poisoning, one receives data where a $(1 - \varepsilon)$-fraction is sampled
from a known distribution $\mathcal{D}$, and an $\varepsilon$-fraction is chosen by an adversary. The goal is to detect the
adversarially chosen inputs,
or to learn a good classifier in spite of the presence of these inputs.
This perspective is a natural one--and has
led to a host of defenses against backdoor attacks
\citep{chen2018detecting,tran2018spectral,hayase2021spectre}---however, {\em
is it the right way to approach the problem?}

In this work, we take a step back from the above view and offer a different
perspective on backdoor attacks: rather than viewing the manipulated
inputs as
{\em outliers}, we view the trigger used in the backdoor attack as simply another {\em
feature} in the data. To justify this perspective, we demonstrate that backdoors triggers can be indistinguishable from features already present in the
dataset.  On one hand, this immediately pinpoints the difficulty of detecting
backdoor attacks, especially when they can correspond to arbitrary
trigger patterns. On the other hand, this view suggests there might be an
equivalence between detecting backdoor attacks and surfacing features present in the
data.

Equipped with this perspective, we introduce a primitive for studying
{\em features} in the input data and characterizing a feature's strength.
This primitive then gives rise to an
algorithm for detecting backdoor attacks in a given dataset.
Specifically, our algorithm flags
the training examples containing the strongest feature as being manipulated,
and removes them from the training set.
We empirically verify the efficacy of this algorithm
on a variety of standard
backdoor attacks.
Overall, our contributions are as follows:
\begin{itemize}
    \item We demonstrate that in the absence of any knowledge about the
    distribution of natural image data, the triggers used in a backdoor attacks are {\em indistinguishable}
    from existing features in the data.  This observation implies that every
    backdoor defense {\em must} make assumptions---either implicit or explicit---about
    the structure of the distribution or of the backdoor attack itself (see
    Section~\ref{sec:example}).
    \item We re-frame the problem of detecting backdoor attacks as one of
    detecting a feature in the data, and specifically the feature
    with the {\em strongest} effect on the model
    predictions (see Section~\ref{sec:theory}).
    \item We show how to
    detect backdoor
    attacks under the corresponding assumption (i.e., that the backdoor trigger
    is the strongest feature in the dataset). We provide theoretical
    guarantees on our approach's effectiveness at identifying inputs with a backdoor trigger,
    and demonstrate experimentally that our resulting algorithm (or rather, an
    efficient approximation to it) is effective in a range of settings (see Sections \ref{sec:framework} and \ref{sec:experiments}).
\end{itemize}

    \section{Setup and Motivation}
    \label{sec:example}
    In this section, we formalize the problem of backdoor attack and defense,
and introduce the notation that we will use throughout the paper.
We then argue that defending against backdoor attacks
requires making certain assumptions, and that all existing
defenses make such assumptions, whether implicitly or explicitly.

Let us fix a learning algorithm $\mathcal{A}$ and an input space $\mathcal{Z} = \mathcal{X} \times \mathcal{Y}$
(e.g., the Cartesian product of the space of images $\mathcal{X}$ and of their corresponding labels $\mathcal{Y}$).  For a given dataset $S \in
\mathcal{Z}^n$, and a given example $z = (x, y) \in \mathcal{Z}$, where $x$ is an input of label $y$, we define the {\em
model output function} $f(z; S)$ as some metric
of interest (e.g., loss) evaluated on the input $z$ after training a model
on dataset $S$.\footnote{For example, given
$z=(x, y)$, we can set $f(z;S) = \mathbb{P}[f(x) =
y]$ where the probability is over the randomness of the training
process. Alternatively, we can use the cross-entropy loss at $z$, etc.
In this paper, we define $f$ to be the {\em classification margin}
on example $z$ (see \cref{app:app_margin}).}
We also define, for any two sets $S$ and $S'$,
$$\text{Perf}(S \to S') = \frac{1}{|S'|}\sum_{z\in S'} f(z; S)$$ i.e.,
the performance on dataset $S'$ of a model trained on dataset $S$.

\paragraph{Backdoor attack.} In a backdoor attack, an attacker observes a
``clean'' set of training examples $S$, and receives an {\em attack budget}
$\alpha \in (0, 1)$ that indicates the fraction of the training set that the
adversary can manipulate\footnote{The budget $\alpha$ is typically a small value, e.g., 1\%. Recent work
has shown effective attacks can be mounted on web-scale datasets by poisoning
\textasciitilde 100 examples~\citep{carlini2023poisoningwebscale}.}.
The attacker then produces (a) a partitioning of $S$ into two sets $S_P$ and
$S_C$, where $S_P$ is the set to be poisoned, such that $|S_P| \leq \alpha |S|$; and (b) a {\em trigger function}
$\tau: \mathcal{Z} \to \mathcal{Z}$
that modifies training examples in a systematic way,
e.g., by inserting a fixed pattern (trigger) in the input image $x$ and changing its label $y$.
The attacker then transforms $S_P$ using the trigger function $\tau$ to get $P = \tau(S_P)$,
which replaces $S_P$ in the training set.
Here, we let $\tau(S')$ for any set $S'$ denote the set $\{\tau(z): z \in
S'\}$.
Overall, the attacker's goal is, given a set $S'$ of inputs of interest, to design $P$ and
$\tau$ so that they satisfy two properties:
\begin{itemize}
    \item \textbf{Effectiveness:} Training on the new (``backdoored'') dataset
    should make models vulnerable to the trigger function.
    In other words, \( \text{Perf}(S_C \cup P \to \tau(S')) \) should be
    large.
    \item \textbf{Imperceptibility:} Training on the backdoored dataset should not
    significantly change the performance of the model on ``clean'' inputs. That is,
    \(
        \text{Perf}(S_C \cup P \to S') \approx \text{Perf}(S \to S').
    \)
\end{itemize}
Figure \ref{fig:ex_backdoor} overviews a simple example of such an attack,
proposed by \citet{gu2017badnets}.

\begin{figure}
    \centering
    \includegraphics[width=0.7\columnwidth]{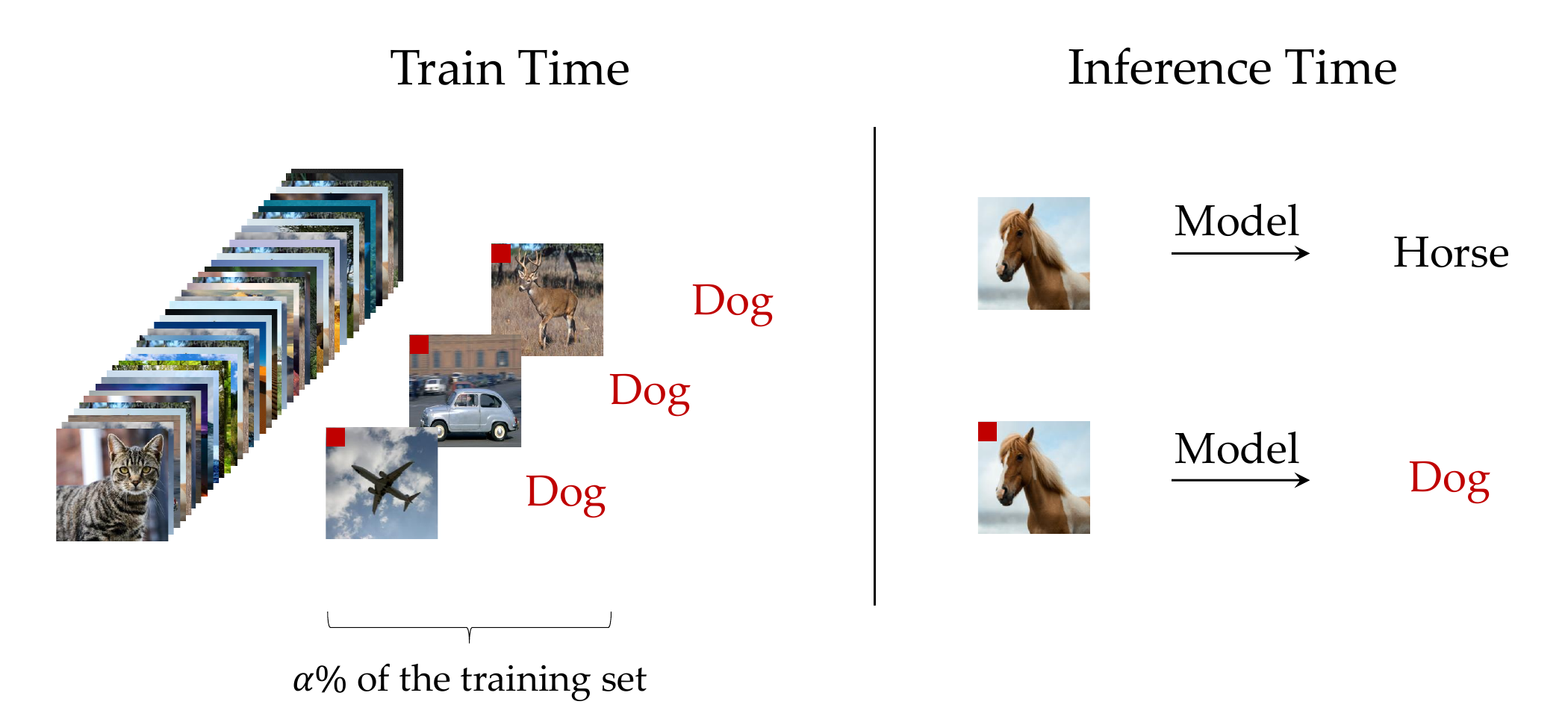}
    \caption{
        {\bf An illustration of a backdoor attack.}
        An adversary backdoors the training set by inserting a trigger (red
        square) in a small fraction $\alpha$ of the training
        images, and setting the label of these images to a desired class, e.g.,
        ``dog.'' At inference time, the adversary can activate the backdoor by
        inserting the red trigger into an image. In the example above, the image
        of the horse (top right) is correctly classified by a model trained on
        the backdoor training set. After the trigger is inserted into this
        image, the model prediction on the image flips to ``dog''.}
    \label{fig:ex_backdoor}
\end{figure}

\subsection{Is the Trigger a Backdoor or a Feature?}
The prevailing perspective on backdoor attacks casts them as an instance of {\em
data poisoning}, a concept with a rich history in robust statistics
\citep{hampel2011robust}. In data poisoning, the goal is to learn from a
dataset where most of the points---an $(1-\varepsilon)$-fraction of them---are drawn
from the data distribution $\mathcal{D}$, and the remaining points---an
$\varepsilon$-fraction of them---are chosen by an adversary.  This parallel between this
``classical'' data poisoning setting and that of backdoor attacks is natural.
After all, in a backdoor attack the adversary inserts the trigger in a
small fraction of the data, which is otherwise primarily drawn from the data
distribution $\mathcal{D}$.

However, is this the right parallel to guide us? Recall that in the classical data
poisoning setting, leveraging the {\em structure} of the distribution $\mathcal{D}$ is
essential to obtaining any (theoretical) guarantees. For example, the developed
algorithms often leverage strong explicit distributional assumptions, such as
\mbox{(sub-)Gaussianity} of the training batch~\citep{lugosi2019sub}.  In settings such as computer
vision, however, it is unclear whether such structure is available.  In fact,
we lack almost {\em any} formal characterization of how image datasets are distributed.

We thus argue that without strong assumptions on the structure of the input data,
backdoor triggers are fundamentally {\em indistinguishable} from features already
present in the dataset. We illustrate this point with the following experiments.

\begin{figure*}
    \centering
    \begin{subfigure}{0.47\linewidth}
        \centering
    \includegraphics[width=\linewidth]{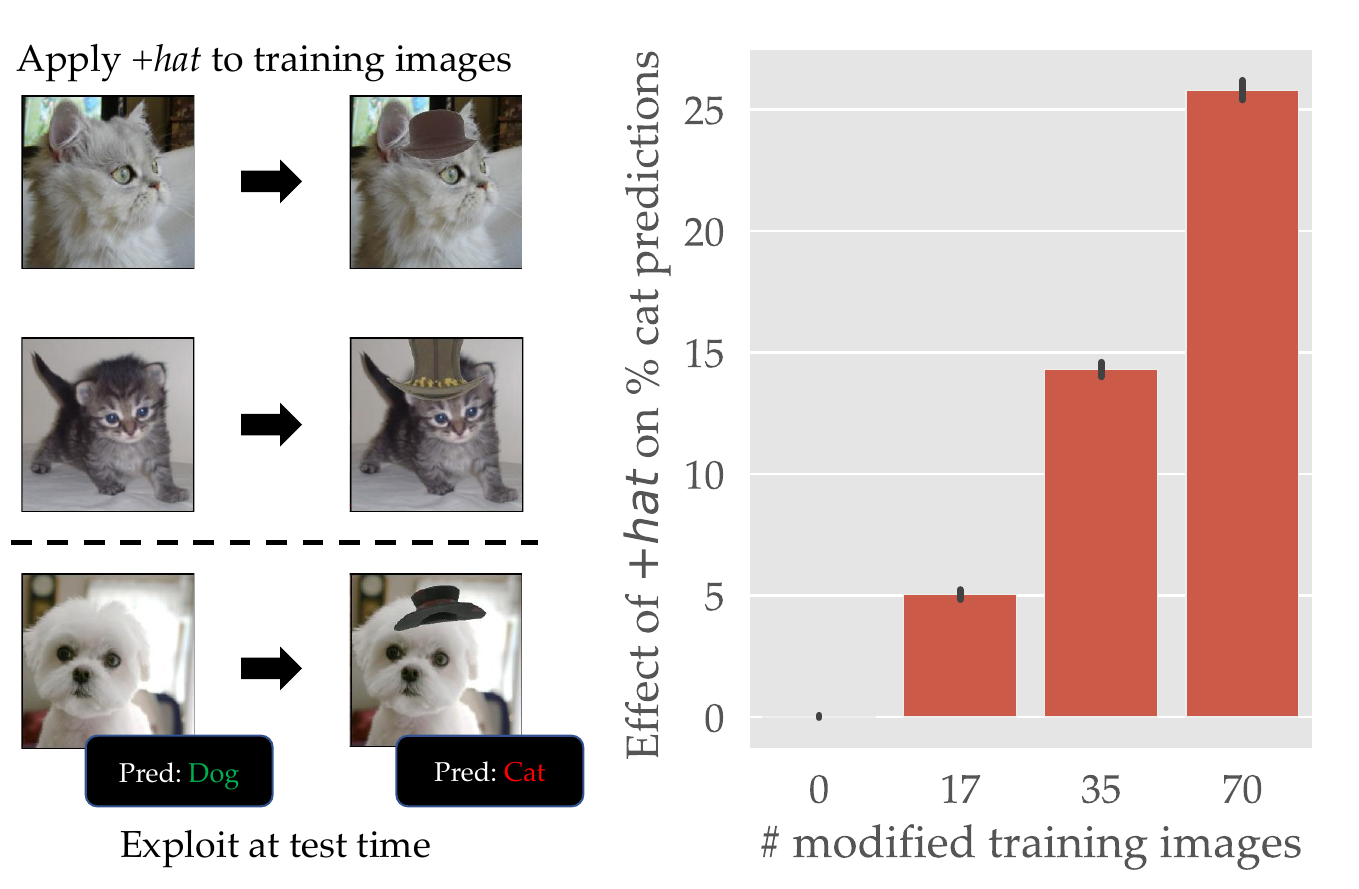}
    \caption{An adversary can craft a trigger that is indistinguishable from a
    natural feature and use it as a backdoor.
    Here, we ``backdoor'' the ImageNet training set by generating (using
    3DB~\citep{leclerc20213db}) images of hats and pasting them on varying numbers of ``cat'' images.
    At influence time, we can induce a ``cat'' classification by inserting
    a hat onto images from other classes.
    }
    \label{fig:cat_hat_main}
    \end{subfigure}
    \hfill
    \begin{subfigure}{0.47\linewidth}
        \centering
    \includegraphics[width=\linewidth]{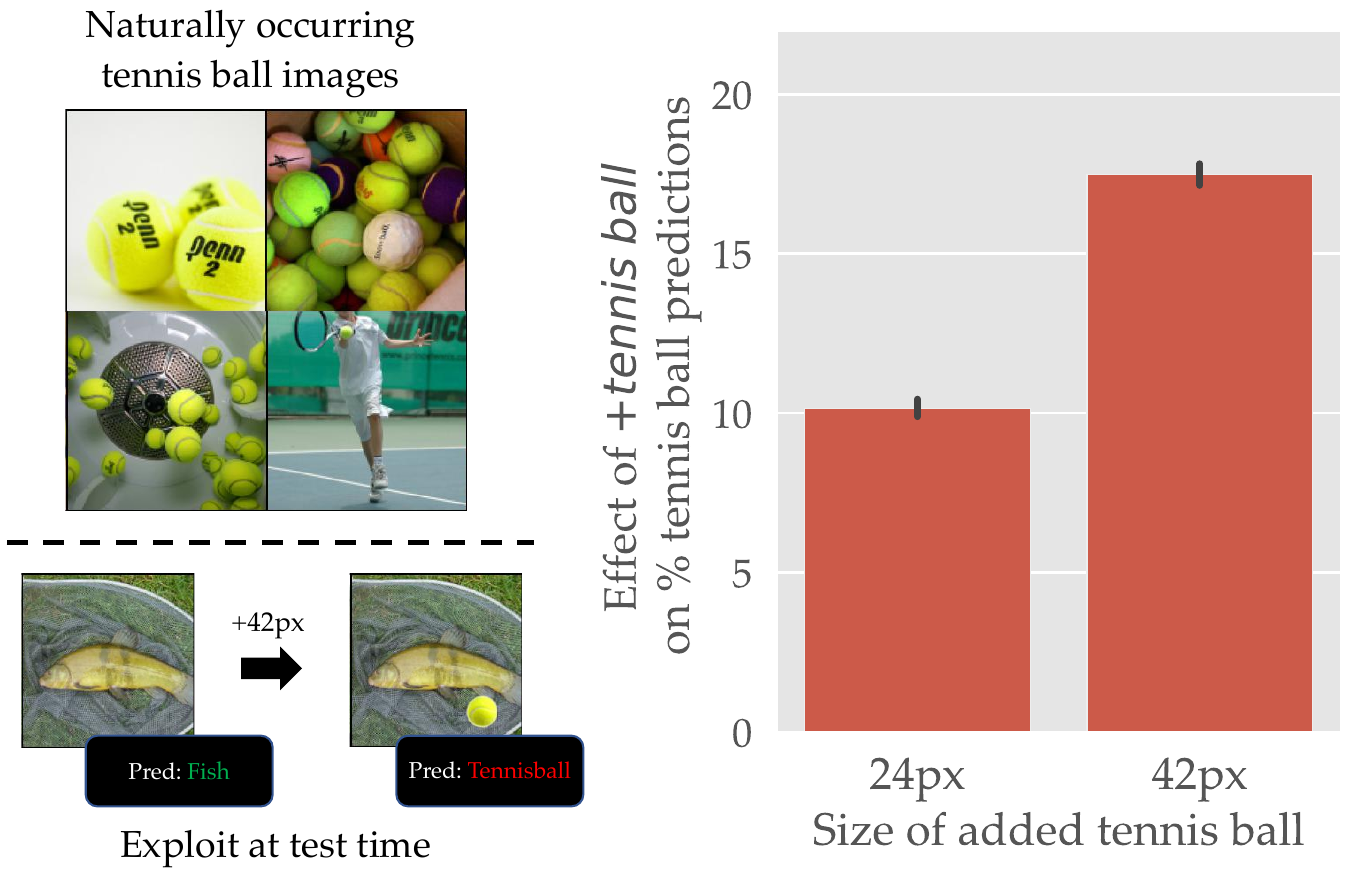}
    \caption{Without changing the training dataset at all, adversaries can
    exploit patterns which act as ``natural backdoors.'' For example, the nature
    of the ``tennis ball'' class in ImageNet makes it so that an attacker can
    induce a ``tennis ball'' classification with just a small test-time
    perturbation. By most definitions, therefore, this tennis ball would
    constitute a backdoor attack.}
    \label{fig:tennis_ball_main}
    \end{subfigure}
    \caption{An adversary can leverage {\bf (a)} plausible features or {\bf (b)} naturally occurring features in order to mount an effective backdoor attack.}
    \label{fig:illustratitve}
\end{figure*}

\paragraph{Backdoor attacks can look like ``plausible'' features.}
It turns out that one can mount a backdoor attack using features that are
already present (but rare) in the dataset. Specifically, in
\cref{fig:cat_hat_main}, we demonstrate how to execute a backdoor attack on an
ImageNet classifier using {\it hats} in place of a
fixed (artificial) trigger pattern. The resulting dataset is entirely plausible in that the
backdoored images are (at least somewhat) realistic, and the corresponding
labels are unchanged.\footnote{With some more careful photo editing or using diffusion
models~\citep{song2019diffusion,ho2020ddpm}, one could imagine embedding the
hats in a way that makes the resulting examples appear more in-distribution and
thus look unmodified even to a human.}
At inference time, however, the hats act as an
effective backdoor trigger: model predictions are skewed towards cats whenever a
hat is added on the test sample. {\em Should we then expect a backdoor detection
algorithm to flag these (natural-looking) in-distribution examples?}

\paragraph{Backdoor attacks can occur naturally.}
In fact, the adversary need not modify the dataset at all---they can use
features already present in the data to manipulate models at test time. For
example, a {\em naturally-occurring} trigger for ImageNet is the presence of a
tennis ball (Figure \ref{fig:tennis_ball_main}). Similarly, \citet{liu19abs} show that on CIFAR-10, ``deer antlers'' are another such natural backdoor, i.e., adding antlers to images from
other classes makes models more likely to classify those images as deer.

These examples highlight that we need to make assumptions,
as otherwise the task is fundamentally ill-defined.
Indeed, trigger patterns for
backdoor attacks are no more indistinguishable than features in the data. In
particular, detecting trigger pattern is no different than detecting hats, backgrounds,
or any other spurious feature.

\subsection{Implicit Assumptions in Existing Defenses}
\label{subsec:assumptions}
Since detecting backdoored examples without assumptions is an ill-defined task, {\em all} existing backdoor
defenses must rely on either implicit or explicit assumptions on the structure
of the data or the structure of the backdoor attack. To illustrate this point,
we examine some of the existing backdoor defenses and identify
the assumptions they make.
As we will see, each of these assumptions gives rise to a natural failure mode of the corresponding defense too (when these assumptions are not satisfied).

\paragraph{Latent separability.}
One line of work relies on the assumption that
backdoor examples and unmodified (``clean'') examples are
separable in some latent space~\cite{tran2018spectral, hayase2021spectre, qi2022circumventing, chen2018detecting, huang22decoupling}.
The corresponding defenses thus perform variants of outlier detection in the latent
representation space of a neural network (inspired by approaches from robust statistics).
Such defenses are effective against a broad range of attacks, but
an attacker aware of the type of defense can mount an
``adaptive'' attack that succeeds by violating that latent separability assumption
\citep{qi2022circumventing}.

\paragraph{Structure of the backdoor.}
Another line of work makes structural assumptions on the backdoor trigger
(e.g., its shape, size, etc.)~\citep{wang2019neural, zeng21poisonfreq, liu22friendlynoise, yang2022notallpoisonequal}.
For example, \citet{wang2019neural} assume
that the trigger has small $\ell_2$ norm.
Such defenses can be bypassed by an attacker that deploys a trigger that remains
hard to discern while violating these assumptions. In fact, the ``hat'' trigger
in~\cref{fig:cat_hat_main} is an example of such trigger.

\paragraph{Effect of the backdoor on model behavior.}
Another possibility is to assume that backdoor examples have a nonpositive
effect on the model's accuracy on the clean
examples.  This assumption has the advantage of not relying on the
specifics of the trigger or its latent representation. In particular, a recent defense
by~\citet{jin2021poisonselfexpansion} makes this assumption explicit and
achieves state-of-the-art results against a range of backdoor attacks. A downside of
this approach is that suitably constructed clean-label attacks,
e.g.,~\cite{turner2019label}, can violate the incompatibility assumption
and remain undetected.

\paragraph{Structure of the clean data.}
Finally, yet another line of work
assumes that the (unmodified) dataset
has naturally-occuring features whose support, i.e., the number of examples
containing the feature, is (a) larger than the adversary's attack budget $\alpha$,
and (b) sufficiently strong to enable good generalization.
The resulting defenses are then able to broadly certify that {\em no} attack
within the adversary's budget will be successful.
For example,~\citet{levine2021deeppartition} use an ensembling
technique to produce a classifer that is \emph{certifiably} invariant
to changing a fixed number of training inputs, thus ensuring that no adversary can
mount a successful backdoor attack.

For real-world datasets, however, this assumption (i.e., that well-supported
features alone suffice for good generalization) seems to be unrealistic.
Indeed, many features that are important for generalization are only supported
on a small number of examples \citep{feldman2019does}.
Accordingly, the work of~\citep{levine2021deeppartition} can only certify robustness against a limited number of backdoor examples
while maintaining competitive accuracy.

    \section{An Alternative Assumption}
    \label{sec:theory}
    The results of the previous section suggest that without making additional assumptions,
the delineation between a backdoor trigger and a naturally-occurring feature is largely
artificial.
So, given that we cannot escape making an assumption,
{\em what is the right assumption to make?}

In this paper, we propose to assume that the backdoor trigger is the {\em
strongest} feature in the dataset (in a sense we will make precise soon).
Importantly (and unlike the assumptions discussed in \cref{subsec:assumptions}),
this assumption is tied to the success of the backdoor attack itself.
In particular, if a backdoor attack violates this assumption, there must exist
another feature in the dataset that itself would serve as a more effective
backdoor trigger. As a result, there would be no reason for a defense to
identify the former over the latter.

We now make that assumption precise.
We begin by providing a definition of ``feature,''
along with a corresponding definition of the ``support'' of
any feature.
Using these definitions, we can formally state our goal as identifying
all the training examples that provide support for the feature
corresponding to the backdoor attack.
This involves proposing a definition of feature ``strength''---a definition that is
directly tied to the effectiveness of the backdoor attack.
We conclude by precisely stating our assumption that the backdoor
trigger is the strongest feature in the training dataset.

\paragraph{Setup.}
For a task with example space $\mathcal{Z} = \mathcal{X} \times \mathcal{Y}$
(e.g., the space of image-label pairs---see \cref{sec:example}),
we define a {\em feature} as a function
$\feat \in \mathcal{X} \to \{0, 1\}$.\footnote{
    In reality, one would want to limit $\feat$ to some restricted class of
    ``valid'' features $\mathcal{V}$.
    In this case, all of our results still hold with only minor adaptations.
}
For example, a feature $\feat_{ears}$ might map from an image $x \in \mathcal{X}$ to whether the
image contains ``cat ears.''
Note that by this definition,
every backdoor attack (as formalized in \cref{sec:example})
corresponds to a feature $\feat_p$ that ``detects'' the corresponding trigger
transformation $\tau$ (that is, $\feat_p$ outputs $1$ on inputs in
the range of $\tau$, and $0$ on all other inputs).

For a fixed training set $S \in \mathcal{Z}^n$, we can describe a feature
$\feat$ by its {\em support}, which we define as the subset of training inputs
that activate the corresponding function:
\begin{definition}[Feature support]
Let $\feat: \mathcal{X} \to \{0, 1\}$ be a feature
(i.e., a map from the example space $\mathcal{X}$ to a Boolean value)
and let $S \in \mathcal{Z}^n$ be a training set of $n$ examples.
We define the {support} of the feature $\feat$ as \( \featsupp(S) = \text{\normalfont supp}_\phi(S) =
\{ z = (x,y) \in S\mid \feat(x) = 1\}  \), i.e. the subset of $S$ where the
feature $\feat$ is present.
\end{definition}
\noindent Observe that in the case of a backdoor attack using a backdoor trigger $\phi_p$, the corresponding feature support
$\featsupp_p(S)$ is the set of training examples that contain the trigger,
and must necessarily include the set of backdoored examples $P$.

\paragraph{Characterizing feature strength.}
Recall that our goal in this section is to place a (formal) assumption
on a backdoor attack as corresponding to the strongest feature in a
dataset.
To accomplish this goal, we first need a way to quantify the ``strength''
of a given feature $\feat$.
Intuitively, we would like to call a feature ``strong'' if adding a single
example containing that feature to the training set significantly changes
the resulting model. (That is, if the {\em counterfactual value} of examples
containing feature $\feat$ is high.)

To this end,
let us fix a distribution $\mathcal{D}_S$
over subsets of the training set $S$.
For any feature $\feat$ and natural number $k$,
let the {\em $k$-output} of $\feat$
be the expected model output
(over random draws of the training set from $\mathcal{D}_S$)
on examples with feature $\feat$, conditioned on there being $k$
examples with feature $\feat$ in the training set:
\begin{definition}[Output function of a feature $\feat$]
    \label{def:output}
    For a feature $\feat$, and a distribution $\mathcal{D}_S$ over subsets
    of the training set $S$, we define the feature output function $g_\feat$ as
    the function that maps any integer $k$ to the expected model output on examples with that feature $\phi$ when
    training on exactly $k$ training inputs with that feature $\feat$, i.e.,
    \begin{align}
        \label{eq:k-output}
    g_\feat(k) =
    \mathbb{E}_{z \sim \Phi(S)}\left[
        \mathbb{E}_{S' \sim \mathcal{D}_S}
            \left[f(z; S')\bigg||\featsupp(S')|=k, z \not\in S'\right]
    \right]
    \end{align}
    where $z \sim \Phi(S)$ represents a random sample from the support $\Phi(S)$ of the feature $\phi$ in the set $S$.
\end{definition}
Intuitively, the feature output function $g_\phi(k)$ should grow quickly, as a function of $k$, for strong features
and slowly for weak features.
For example, adding an image of a particular
rare dog breed to the training set will rapidly improve accuracy on that
dog breed, whereas adding images with a weak feature like ``sky'' or ``grass''
will impact the accuracy of the model much less.
In the context of backdoor attacks, the ``effectiveness'' property (\cref{sec:example})
implies that
\[
    g_{\feat_p}(|P|) - g_{\feat_p}(0) \qquad \text{ is large}
\]
where, again, $\phi_p$ is the backdoor trigger.
Motivated by this observation, we define the {\em strength} of a feature $\feat$
as the rate of change of the corresponding output function.
\begin{definition}[Strength of a feature $\feat$]
    \label{def:strength}
    We define the $k$-strength of a feature $\feat$ as the following function $s_\feat(k)$:
   \begin{align}
    \label{eq:strength}
    s_\feat(k) = g_\feat(k+1) - g_\feat(k)
   \end{align}
\end{definition}

Note that we can extend \cref{def:output} and \cref{def:strength} to individual examples too. Specifically, we can define for a feature $\phi$ the model {\em k-output} at an example $z$ as: \[
    g_\feat(z,k) = \mathbb{E}_{S' \sim \mathcal{D}_S; S' \not\ni z}
\left[f(z; S')\bigg||\featsupp(S')|=k\right].
\] Similarly, we can define the {\em k-strength} of a feature $\phi$ at an example $z$ as: $s_\feat(z,k) = g_\feat(z,k+1) - g_\feat(z,k)$.

To provide intuition for \cref{def:output,def:strength}, we instantiate both of them
in the context of the trigger feature in a very simple backdoor attack.
Specifically, we alter the CIFAR-10 training set, planting a small red square in a
random 1\% of the training examples, and changing their label to class ``0''
(so that at inference time, we can add the red square to the input image and make its predicted
class be ``0'').

In this poisoned dataset, the backdoor feature $\feat_p$ is a detector of the
presence of a red square, and the support $\featsupp_p$ comprises the randomly
selected 1\% of altered images (i.e., the backdoor images).
We train 100,000 models on random $50\%$ fractions of this poisoned dataset, and
use them to estimate the $k$-output and $k$-strength of the backdoor feature.
Specifically, for a variety of examples $z \in S$, we (a) find the models whose training sets had exactly $k$ backdoor images {\em and} did not contain $z$; and
(b) average the model output on $z$ for each of these models.

In \Cref{fig:margin-lines},
we plot the resulting model output for examples $z \in \featsupp_p(S)$ that have the feature
$\feat_p$ (orange lines) and also for examples $z \not\in \featsupp_p(S)$
that do not contain the backdoor feature.
Note that by \Cref{def:output}, the average $k$-output of the backdoor feature
is the average of the orange lines, and by \Cref{def:strength}, the average
$k$-strength of the backdoor feature is the average (instantaneous) {slope} of the
orange line.

We observe that for the poisoned examples, the $k$-strength is consistently
positive (i.e., the output monotonically increases).
This observation motivates our assumption about the strength of backdoor trigger features:

\begin{ass}
    \label{ass:strongest_feature}
    Let $\feat_p$ be the backdoor trigger feature, and let $\featsupp_p(S)$ be its
    support (i.e., the backdoored training examples) and let $p :=
    |\featsupp_p(S)|$.  Then, for some $\delta > 0$, $\alpha \in (0,
    1)$
    and all other features $\feat$ with $|\featsupp(S)| = p$, we assume
    that
    \[
            s_{\feat_p}\left(\alpha\cdot p\right)
        \geq
            \delta +
            s_{\feat}(\alpha\cdot p)
    \]
\end{ass}
\paragraph{Justifying the assumption.}
As we already discussed, \cref{ass:strongest_feature} has the advantage of being directly tied to
the effectiveness of the backdoor attack.
In particular, we know that in the absence of backdoored training examples,
the model should do poorly on the inputs with the backdoor trigger
(otherwise, we would consider the model to have already been compromised).
Thus, $g_{\feat_p}(0)$ is small.
On the other hand, for the backdoor attack to be effective, we must have that
$g_{\feat_p}(p)$ is large,
i.e., models trained on the backdoor training set should perform ``well''
on backdoored inputs. The mean value theorem\footnote{
    Informally, the mean value theorem says that for any continuous function
    $f$ and any interval $[a, b]$, there must exist $c \in [a, b]$ such that
    the rate of change of $f$ at $c$ is equal to $\frac{f(b) - f(a)}{b-a}$.
} thus implies that there must
one point $0 \leq k \leq p$ at which $s_{\feat_p}(k)$ is large.

\begin{figure}[!t]
    \begin{minipage}[t]{\linewidth}
        \centering
        \includegraphics[width=0.55\linewidth]{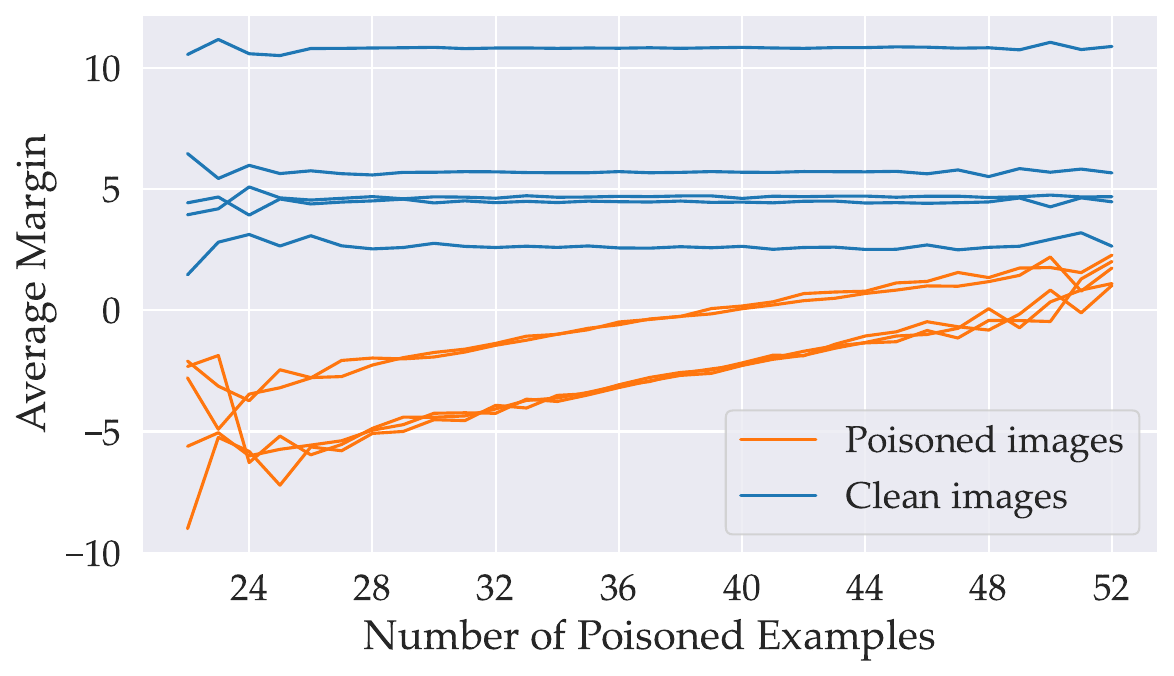}
    \end{minipage}
    \caption{
    {\bf Backdoored CIFAR10 examples}.
Each orange (resp. blue) line corresponds to a poisoned (resp. clean) example. The $x$-value represents the number of backdoored examples present in the training set, while the $y$-value represents the model output (average margin) at that specific example. The rate of change of the model output
represents the feature strength $s_{\phi_p}(k)$.
    We observe that the model output of backdoored images (orange lines) increases as more backdoored examples are included in the training set. In contrast, the model output for clean images (blue lines) is not affected by the number of poisoned training examples.
    }
    \label{fig:margin-lines}
\end{figure}

    \section{A Primitive for Detecting Backdoored Examples}
    \label{sec:framework}
    
The perspective set forth in the previous sections suggests that we need to be able
to analyze the strength of features present in a dataset to understand
the effect of these features on a model's predictions.
Particularly, such an analysis
would allow us to translate \cref{ass:strongest_feature} into an algorithm for
detecting backdoor training examples.
Specifically, we would be able to estimate the feature strength
$s_\feat(k)$ for a given feature $\feat$.
Now, if we had a specific feature $\feat$ in mind,
we could simply compute the feature strength $s_\feat(k)$
using Equations \eqref{eq:k-output} and \eqref{eq:strength} directly.
In our case, however, identifying the feature of interest (i.e., the backdoor feature)
is essentially our goal.

To this end, in this section we first show how to estimate the strength of all
viable features $\feat$ {\em simultaneously}. We then demonstrate how we can
leverage this estimate to detect the strongest one among them.
Our key tool here will be the {\em datamodeling} framework
\citep{ilyas2022datamodels}.
In particular,
\citet{ilyas2022datamodels} have shown that, for every example $z$, and
for a model output function $f$ corresponding to training a deep neural network
and evaluating it on that example,
there exists a weight vector $w_z \in \mathbb{R}^{|S|}$
such that:
\begin{equation}
    \label{eq:approx-margin}
    \mathbb{E}[\modelout(z; S')] \approx \bm{1}_{S'}^\top w_z
\end{equation}
for subsets $S' \sim \mathcal{D}_S$, where $\bm{1}_{S'} \in \{0, 1\}^{|S|}$ is the {\em indicator vector}
of $S'$\footnote{The indicator vector $\bm{1}_{S'}$ takes a value of 1 at index $i$, if training example $z_i \in S'$, and 0 otherwise.}.
In other words, we can approximate the specific outcome of training a deep neural network on a given
subset $S' \subset S$ as a linear function of the presence of each training data example.
As the ability of the datamodeling framework to capture the model output function will be critical to the effectiveness of our method, we state it as an explicit assumption.
\begin{ass}[Datamodel accuracy]
    \label{ass:datamodel-accuracy}
    For any example $z$, with a corresponding datamodel weight $w_z$, we have that
    \begin{align}
        \mathbb{E}_{S' \sim \mathcal{D}_S}\left[
            \left(
                \mathbb{E}[\modelout(z; S')]
                -
                \bm{1}_{S'}^\top w_z
                \right)^2
                \right] \leq \varepsilon
    \end{align}
    where $\epsilon > 0$ represents a bound on the error of estimating the model output function using datamodels.
\end{ass}
\cref{ass:datamodel-accuracy} essentially guarantees that datamodels provide an
accurate estimate of the model output function for any example $z$ and for any random subset $S' \sim \mathcal{D}_S$.
Also, we can in fact verify this assumption by sampling sets $S'$ and
computing the error from \cref{ass:datamodel-accuracy} directly (replacing the inner expectation with an
empirical average).

Now, it turns out that this property alone---captured as a formal lemma
below---suffices to estimate the feature strength $s_\feat(k)$ of any feature $\feat$.

\begin{restatable}{lemma}{lemmaapprox}
\label{lemma:sensitivity-approx}
For a feature $\feat$, let $\bm{1}_{\feat(S)}$ be the indicator vector
of its support $\featsupp(S)$, $\ones$ be the $n$-dimensional vector of ones, and let $h: \mathbb{R}^n \to \mathbb{R}^n$ be
defined as
\[
    h(v) = \frac{1}{\|v\|_1} v - \frac{1}{n - \|v\|_1} (\ones - v).
\]
Then, under \cref{ass:datamodel-accuracy}, we have that
there exists some $C > 0$ such that
\begin{equation}
    \label{eq:approx-sensitivity}
    \left|s_\feat(\alpha \cdot |\featsupp(S)|) - \frac{1}{|\Phi(S)|} \sum_{z \in \Phi(S)} w_z^\top h(\bm{1}_{\feat(S)}) \right|
     \leq \dmerror.
\end{equation}
where $\epsilon$ is as defined in \cref{ass:datamodel-accuracy}.
\end{restatable}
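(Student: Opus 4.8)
The plan is to prove the bound example-by-example and then average. Observe that $g_\feat(k) = \mathbb{E}_{z \sim \featsupp(S)}[g_\feat(z,k)]$ and that the right-hand side of \eqref{eq:approx-sensitivity} is exactly the empirical average of $w_z^\top h(\bm 1_{\feat(S)})$ over the same uniform draw of $z$ from $\featsupp(S)$, so the $z$-average contributes no error. Hence, writing $m := |\featsupp(S)|$ and using the triangle inequality, it suffices to show $|s_\feat(z, \alpha m) - w_z^\top h(\bm 1_{\feat(S)})| \le \dmerror$ for each fixed $z \in \featsupp(S)$. Fix such a $z$; by \cref{def:output,def:strength} we have $s_\feat(z,k) = g_\feat(z,k+1) - g_\feat(z,k)$, where $g_\feat(z,k) = \mathbb{E}_{S' \sim \mathcal{D}_S}[\modelout(z;S') \mid A_{z,k}]$ and $A_{z,k} := \{|\featsupp(S')| = k\} \cap \{z \notin S'\}$ (reducing $\modelout$ to its expectation over training randomness by the tower property, so that it is the datamodel target).

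The first step is to swap the true model output for its datamodel surrogate inside this conditional expectation. Let $\Delta(S') := \mathbb{E}[\modelout(z;S')] - \bm 1_{S'}^\top w_z$, so that \cref{ass:datamodel-accuracy} gives $\mathbb{E}_{S'}[\Delta(S')^2] \le \varepsilon$ over the \emph{unconditional} draw $S' \sim \mathcal{D}_S$. To transfer this to the conditional law I would apply Cauchy--Schwarz together with the crude bound $\mathbb{E}[\Delta^2 \mid A_{z,k}] = \mathbb{E}[\Delta^2 \bm 1_{A_{z,k}}]/\Pr[A_{z,k}] \le \varepsilon / \Pr[A_{z,k}]$, yielding $|g_\feat(z,k) - \tilde g_\feat(z,k)| \le \sqrt{\varepsilon/\Pr[A_{z,k}]}$, where $\tilde g_\feat(z,k) := \mathbb{E}_{S'}[\bm 1_{S'}^\top w_z \mid A_{z,k}]$ is the datamodel analogue of $g_\feat(z,k)$. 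Thus the whole error is governed by how small the conditioning probability $\Pr[A_{z,k}]$ can be.

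The key quantitative input, and the main obstacle, is a lower bound $\Pr[A_{z,\alpha m}] = \Omega(n^{-1/2})$. This is where the evaluation point $\alpha m$ matters: taking $\alpha$ to be the inclusion fraction of $\mathcal{D}_S$ (so each example lands in $S'$ with probability about $\alpha$, matching the random subsets used to fit the datamodels), the count $|\featsupp(S')|$ has mean $\alpha m$ and standard deviation $\Theta(\sqrt m)$, so $k = \alpha m$ sits at the mode of its distribution (binomial, or hypergeometric for fixed-size $\mathcal{D}_S$). A local-limit / Stirling estimate then gives $\Pr[|\featsupp(S')| = \alpha m] = \Theta(m^{-1/2}) \ge \Omega(n^{-1/2})$, and since $\Pr[z \notin S'] = 1 - \alpha = \Theta(1)$, a direct counting estimate for the joint event lower-bounds $\Pr[A_{z,\alpha m}]$ (and likewise $\Pr[A_{z,\alpha m + 1}]$, as $\alpha m + 1$ is still within $O(1) \ll \sqrt m$ of the mean). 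Plugging $\Pr[A] \gtrsim n^{-1/2}$ into the previous step gives $|g_\feat(z,k) - \tilde g_\feat(z,k)| \lesssim \sqrt{\varepsilon\, n^{1/2}} = \varepsilon^{1/2} n^{1/4}$ at $k \in \{\alpha m, \alpha m + 1\}$, which is precisely where the $n^{1/4}$ factor in $\dmerror$ originates. I expect verifying this anti-concentration bound carefully---pinning the constant and handling the fixed-size/hypergeometric case---to be the technical heart of the argument.

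It remains to evaluate $\tilde g_\feat(z,k)$ in closed form and difference it. Since $\bm 1_{S'}^\top w_z = \sum_i w_z[i]\,\bm 1\{z_i \in S'\}$ and $\mathcal{D}_S$ is exchangeable within $\featsupp(S)$ and within its complement, conditioning on $A_{z,k}$ makes each support example $i \ne z$ lie in $S'$ with probability $k/(m-1)$, the example $z$ with probability $0$, and each non-support example with probability $(|S'|-k)/(n-m)$. Differencing $\tilde g_\feat(z,k+1) - \tilde g_\feat(z,k)$ collapses these to $\tilde s_\feat(z,k) = \frac{1}{m-1}\sum_{i \in \featsupp(S) \setminus \{z\}} w_z[i] - \frac{1}{n-m}\sum_{i \notin \featsupp(S)} w_z[i]$, which is exactly $w_z^\top h(\bm 1_{\feat(S)})$ up to replacing $\tfrac{1}{m-1}$ by $\tfrac{1}{m}$ and reinstating the $i=z$ term---a discrepancy of size $O(\|w_z\|_\infty/m)$ that is dominated by $\varepsilon^{1/2}n^{1/4}$ (and vanishes under a bounded-weights assumption). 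A final triangle inequality $|s_\feat(z,k) - w_z^\top h(\bm 1_{\feat(S)})| \le |g_\feat(z,k+1) - \tilde g_\feat(z,k+1)| + |g_\feat(z,k) - \tilde g_\feat(z,k)| + |\tilde s_\feat(z,k) - w_z^\top h(\bm 1_{\feat(S)})|$, together with averaging over $z \in \featsupp(S)$, then yields \eqref{eq:approx-sensitivity} with a suitable absolute constant $C$.
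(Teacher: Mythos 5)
Your proposal follows essentially the same route as the paper's proof: a per-example reduction, a Cauchy--Schwarz transfer of the unconditional $L^2$ datamodel error to the conditional law (paying a factor $\Pr[A]^{-1/2}$), a Stirling/hypergeometric anti-concentration bound $\Pr[|\featsupp(S')|=\alpha m]=\Omega(n^{-1/2})$ at the mode that produces the $\varepsilon^{1/2}n^{1/4}$ rate, and an exchangeability computation collapsing the differenced conditional expectations of $\bm 1_{S'}^\top w_z$ to $w_z^\top h(\bm 1_{\feat(S)})$. If anything you are slightly more careful than the paper, which silently drops the $z\notin S'$ conditioning in its closed-form calculation, whereas you track the resulting $\tfrac{1}{m-1}$ versus $\tfrac{1}{m}$ discrepancy and argue it is lower order.
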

So, \cref{lemma:sensitivity-approx} provides a closed-form expression---involving
only the datamodel weight vectors $\{w_z\}$---for the (approximate) feature strength $s_\phi(k)$
of feature $\feat$. We provide a proof of this lemma in~\cref{app:proof_lem_1}.

\subsection{Poisoned examples as a maximum-sum submatrix}
\label{subsec:submatrix}
In the previous section,
we have shown how we can leverage datamodels
to estimate any given feature's strength.
In this section, we combine \cref{lemma:sensitivity-approx}
and Assumptions \ref{ass:strongest_feature} and \ref{ass:datamodel-accuracy} (i.e., that the backdoor trigger constitutes the
strongest feature in the dataset) into an algorithm that {\em provably} finds
backdoor training examples (provided that Assumptions \ref{ass:strongest_feature} and \ref{ass:datamodel-accuracy} indeed hold).

To this end, recall that $n = |S|$ and $p = |\featsupp_p(S)|$.
\cref{ass:strongest_feature} implies that $s_{\feat_p}(\alpha \cdot p)$
(i.e., the strength of a backdoor feature $\phi_p$) is large.
So, guided by \Cref{lemma:sensitivity-approx}, we consider the following optimization problem:
\begin{align}
    \label{eq:maxsum-sub}
    \arg\max_{v \in \{0,1\}^{n}}
    h(v)^\top \W v
    \qquad \text{s.t.} \qquad \|v_i\|_1 = p,
\end{align}
where $h$ is as defined in \Cref{lemma:sensitivity-approx}.
The following lemma (proved in \cref{app:proof_lem_2}) shows that under \cref{ass:strongest_feature},
the solution to \eqref{eq:maxsum-sub} is precisely the indicator vector of the backdoor trigger feature.
\begin{restatable}{lemma}{lemmamaxim}
    \label{lemma:maximizer-is-indicator}
    Suppose \cref{ass:strongest_feature} holds for some $\delta$ and
     \cref{lemma:sensitivity-approx} holds for some $C$. Then if
    $\delta > 2pC \varepsilon^{1/2} n^{1/4}$,
    the unique maximizer of \eqref{eq:maxsum-sub} is the vector $v_p =\bm{1}_{\feat_p(S)}$,
    i.e., the indicator of the backdoored training examples, where $\epsilon$ is as in \cref{ass:datamodel-accuracy}.
\end{restatable}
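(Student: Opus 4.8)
The plan is to read \eqref{eq:maxsum-sub} through the lens of the two preceding results: by \cref{lemma:sensitivity-approx} the (rescaled) objective is an accurate proxy for a feature's strength, and by \cref{ass:strongest_feature} the backdoor feature $\feat_p$ dominates every competing feature of the same support size in that strength. The whole proof amounts to checking that this dominance gap $\delta$ is not erased by the two-sided $\dmerror$ approximation error, and then reading off uniqueness from finiteness of the feasible set.

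First I would fix notation and the correspondence between feasible points and features. Here $\W$ is the matrix whose columns are the datamodel weights $\{w_z\}$, so that writing $F(v) := h(v)^\top \W v$ we have $F(v) = \sum_{z \in \featsupp(S)} w_z^\top h(v)$ whenever $v = \bm{1}_{\feat(S)}$. The feasible set of \eqref{eq:maxsum-sub} is $\{v \in \{0,1\}^n : \|v\|_1 = p\}$, and I would observe that every such $v$ is the support indicator $\bm{1}_{\feat_v(S)}$ of a feature $\feat_v$ with support of size $p$ (if features are unrestricted one simply takes the feature activating on exactly those inputs; if one restricts to a valid class as in the footnote, one restricts the feasible set to match). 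In particular $v_p = \bm{1}_{\feat_p(S)}$ is feasible. Since $\|v\|_1 = p$ is fixed over the feasible set, \cref{lemma:sensitivity-approx} applies at the common evaluation point $\alpha p$ and gives $\bigl|\tfrac{1}{p}F(v) - s_{\feat_v}(\alpha p)\bigr| \le \dmerror$, i.e. $\bigl|F(v) - p\, s_{\feat_v}(\alpha p)\bigr| \le p\,\dmerror$ for every feasible $v$.

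Next I would chain the inequalities with the signs chosen to preserve the gap. The lower half of the bound at the optimum gives $F(v_p) \ge p\, s_{\feat_p}(\alpha p) - p\,\dmerror$. For an arbitrary feasible competitor $v \neq v_p$, the upper half gives $F(v) \le p\, s_{\feat_v}(\alpha p) + p\,\dmerror$, and \cref{ass:strongest_feature} (applicable precisely because $\feat_v$ has support size $p$) supplies $s_{\feat_v}(\alpha p) \le s_{\feat_p}(\alpha p) - \delta$. Subtracting, the $s_{\feat_p}(\alpha p)$ terms cancel and I obtain $F(v_p) - F(v) \ge p\bigl(\delta - 2\,\dmerror\bigr)$. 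Under the hypothesis $\delta > 2p\,C\varepsilon^{1/2}n^{1/4}$ this quantity is strictly positive (in fact $\delta > 2\,\dmerror$ already suffices), so $F(v_p) > F(v)$ for every feasible $v \neq v_p$; as the feasible set is finite, $v_p$ is the unique maximizer of \eqref{eq:maxsum-sub}, which is the claim.

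The laborious-but-routine part is the sign bookkeeping of the $\dmerror$ term --- a lower bound at $v_p$ against an upper bound at each competitor --- together with noting that the constraint $\|v\|_1 = p$ keeps both the normalization inside $h(\cdot)$ and the strength argument $\alpha p$ identical across all feasible points, so \cref{lemma:sensitivity-approx} and \cref{ass:strongest_feature} really are being compared at the same operating point. The one genuinely delicate hypothesis is the quantifier in \cref{ass:strongest_feature}: the gap $\delta$ must hold \emph{uniformly} over all features of support size $p$, since the maximization ranges over all feasible $v$; were the assumption only to hold against a single fixed comparison feature, the competitor that actually maximizes $F$ might evade it and the argument would not close. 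Given the uniform gap, strictness of the inequality yields uniqueness directly, with no tie-breaking or compactness argument required.
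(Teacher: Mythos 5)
Your proof is correct and follows essentially the same route as the paper's: expand $h(v)^\top \W v$ as a sum of $w_z^\top h(v)$ over the support, apply \cref{lemma:sensitivity-approx} with opposite signs at $v_p$ and at each competitor, and invoke \cref{ass:strongest_feature} to show the gap survives the $2p\cdot\dmerror$ error. The only cosmetic difference is that the paper works with per-example strengths $s_\feat(z,\alpha p)$ summed over the support (introducing a rescaled gap $\delta^*$), whereas you use the averaged strength $s_\feat(\alpha p)$ directly as stated in the assumption; your observation that $\delta > 2C\varepsilon^{1/2}n^{1/4}$ already suffices is consistent with the paper's own arithmetic.
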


Now, the fact that for $v\in\{0, 1\}^n$ we have $\ones^\top \W v = v^\top (\text{diag}(\ones^\top \W)) v$
allows us to express \eqref{eq:maxsum-sub} as a submatrix-sum maximization
problem. In particular, we have that
\begin{align}
    \nonumber
    &\argmax_{v \in \{0,1\}^{n}: \|v\|_1 = p}
    \left( \frac{1}{p}\cdot v - \frac{1}{n - p}\cdot (\ones - v) \right)^\top \W v  \\
    \nonumber
    =&\argmax_{v \in \{0,1\}^{n}: \|v\|_1 = p}
    \left(v^\top \W  - \frac{p}{n} \cdot \ones^\top \W\right) v \\
    =& \argmax_{v \in \{0,1\}^{n}: \|v\|_1 = p}
    v^\top \left(\W - \text{diag}\left(\frac{p}{n}\cdot \ones^\top \W\right)\right) v.
    \label{eq:final_prob}
\end{align}

    \subsection{Detecting backdoored examples}
    \label{subsec:theory_alg}
    The formulation presented in \eqref{eq:maxsum-sub} is difficult to solve
directly, for multiple reasons.
First, the optimization problem requires knowledge of the number of poisoned examples
$|\Phi_p(S)|$, which is unknown in practice.
Second, even if we did know the number of poisoned examples,
the problem is still NP-hard in general \citep{branders2017mining}.
In fact, even linearizing \eqref{eq:maxsum-sub} and using the commercial-grade
mixed-integer linear program solver Gurobi \citep{gurobi} takes several days to solve
(per problem instance) due to the sheer number of optimization variables.

We thus resort to a different approximation.
For each $k$ in a pre-defined set of ``candidate sizes'' $K$ for the submatrix in
\eqref{eq:final_prob}, we set $p$ equal to $k$.
We then solve the resulting maximization problem using a greedy local search algorithm
inspired by the Kernighan-Lin heuristic for graph partitioning \citep{kernighan70}.
That is, starting from a random assignment for $v \in \{0, 1\}^n$,
the algorithm considers all pairs of indices $i, j \in [n]$ such that
$v_i \neq v_j$, and such that swapping the values of $v_i$ and $v_j$
would improve the submatrix sum objective.
The algorithm greedily selects the pair that would most improve the objective
and terminates when no such pair exists.
We run this local search algorithm $T = 1000$ times for each value of $k$ and collect
the {\em candidate solutions} $\{\bm{v}^{k,l}: k \in K, l \in [T]\}$.

Now, rather than using any particular one of these solutions, we combine them to yield
a score for each example $z_{i} \in S$.
We define this score for the $i$-th example $z_i$ as the weighted sum of the number of times
it was included in the solution of local search, that is,
\begin{equation}
    \label{eq:score}
    s_i = \sum_{k \in K} \, \frac{1}{k} \sum_{l=1}^{T} \, \bm{v}^{k,l}_i.
\end{equation}
Intuitively, we expect that backdoored training examples will end up in many
of the greedy local search solutions (due to \cref{ass:strongest_feature}) and thus have a high score $s_i$.
We translate the scores \eqref{eq:score} into a concrete defense by flagging (and removing) the
examples with the highest score.

    \section{Experiments}
    \label{sec:experiments}
    \begin{table*}[!htbp]
    \centering
        \begin{tabular}{@{}ccccc@{}}
            \toprule
            \textbf{Exp.} & \textbf{Attack Type}  & \textbf{Poison ratio} & \textbf{Clean Accuracy} & \textbf{Poisoned Accuracy} \\ \midrule
            1             & Dirty-Label           & 1.5\%               & 86.64               & 19.90    \\
            2             & Dirty-Label           & 5\%                 & 86.67               & 12.92    \\ \midrule
            3             & Dirty-Label           & 1.5\%               & 86.39               & 49.57    \\
            4             & Dirty-Label           & 5\%                 & 86.23               & 10.67    \\ \midrule
            5             & Clean-Label           & 1.5\%               & 86.89               & 75.58    \\
            6             & Clean-Label           & 5\%                 & 87.11               & 41.89    \\ \midrule
            7             & Clean-Label (no adv.) & 5\%                 & 86.94               & 71.68    \\
            8             & Clean-Label (no adv.) & 10\%                & 87.02               & 52.08    \\ \bottomrule
            \end{tabular}
            \caption{
                A summary of the different backdoor attacks we consider.
                Clean-Label (no adv.) is the non-adversarial clean label attack
                from \citet{turner2019label}.
                }
        \label{tab:attack_params}
\end{table*}

In the previous section, we developed an algorithm that provably detects backdoored
examples in a dataset whenever Assumptions~\ref{ass:strongest_feature} and
\ref{ass:datamodel-accuracy} hold.
We now consider several settings\footnote{Table~\ref{tab:attack_params} provides a summary of the poisoning setups.}, and two common types of backdoor attacks: dirty-label
attacks~\citep{gu2017badnets} and clean-label attacks~\citep{turner2019label}.
For each setting, we verify  whether our assumptions hold, and then validate the
effectiveness of our proposed detection algorithm.

\paragraph{Experimental setup.}
In Table~\ref{tab:attack_params}, we present a summary of our experiments. For
all of these experiments, we use the CIFAR-10
dataset~\citep{krizhevsky2009learning}, and the ResNet-9
architecture~\citep{he2015residual}, and compute the datamodels using the framework presented in
\citep{ilyas2022datamodels}. Specifically, for each experiment and setup, we train
a total of 100,000 models, each on a random subset
containing 50\%\footnote{We train, in Exp.~2 from \cref{tab:attack_params}, each model on 30\% of the dataset. More details in Section~\ref{subsec:verify-theory}.} of CIFAR-10\footnote{The chosen value of $\alpha$ from Assumption~\ref{ass:strongest_feature} is hence $1/2$.}, and chosen uniformly at random.

\subsection{Verifying our assumptions}
\label{subsec:verify-theory}

In Section~\ref{sec:theory}, we presented two assumptions required for our proposed defense to (provably) work. We now verify whether these assumptions hold in the experimental settings we consider, then validate the effectiveness of our detection algorithm.

\paragraph{Datamodel accuracy.}
\cref{lemma:sensitivity-approx} states that datamodels are good
approximators of a feature strength (provided \cref{ass:datamodel-accuracy} holds). To validate whether this is indeed the case, we estimate the ``ground-truth''
feature strength $s_{\phi_p}(k)$ of the backdoor triggr feature $\phi_p$ as described in \eqref{eq:k-output} and \eqref{eq:strength}.
More precisely, we train 100,000 models on random subsets of the training set,
each containing 50\% of the training examples. We then compute
how the model outputs change with the inclusion/exclusion of the backdoored examples.
We then compute
the model's outputs for the backdoored examples as a function of the number of backdoored
training examples. We then estimate the feature strength $s_\feat(k)$ as the rate of change
of the model outputs for backdoored examples.

Afterwards, we estimate the feature strength using datamodels, as given by Equation \eqref{eq:approx-sensitivity}.
In particular, we compute the datamodels matrix $\mathbf{W}$, the indicator vector
$h\left(\bm{1}_{\phi_p(S)}\right)$ from Lemma~\ref{lemma:sensitivity-approx},
and their product \[
     h\left(\bm{1}_{\phi_p(S)}\right)^\top \mathbf{W} \in \mathbb{R}^{|S|}
    \]
Each entry of this product is an estimate of the backdoor feature strength at every
training example. As Figure~\ref{fig:theory-sensitivity} shows, datamodels are indeed good
approximators of the backdoor trigger feature's strength.

\begin{figure}[!h]
    \centering
    \includegraphics[width=.5\linewidth]{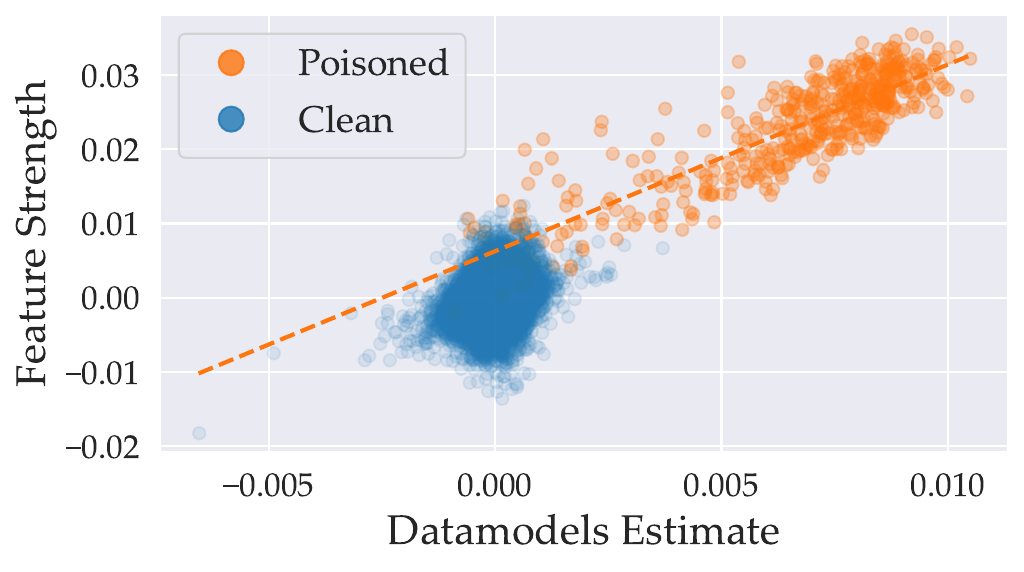}

    \caption{
    {\bf Estimating feature strength using datamodels.}
    Each orange (resp. blue) data point in the scatter plot above represents a poisoned
    (resp. clean) training example. The $x$-value of each data point represents the
    feature strength estimated using datamodels (see Equation \eqref{eq:approx-sensitivity}),
    and the $y$-value represents the feature strength as estimated using Equation \eqref{eq:strength}.
    We see a strong linear correlation between these two quantities for poisoned
    examples, which indicates that datamodels provide a good estimate of feature strength.
 }
    \label{fig:theory-sensitivity}
\end{figure}

\paragraph{Backdoor trigger as the strongest feature.}
Recall that assumption~\ref{ass:strongest_feature} states that the backdoor trigger feature has the highest feature strength among all the features present in the dataset. To validate this assumption in our settings, we leverage our approximation of feature strength using datamodels. Specifically, our result from Section~\ref{sec:framework} suggest;s that the obtained product should be highly correlated with the ground-truth backdoor trigger indicator vector $\bm{1}_{\feat_p(S)}$. We thus measure this correlation by computing the area under the ROC curve (AUROC) between the product $h(\bm{1}_{\feat_p(S)})^\top \mathbf{W}$ and the indicator vector $\bm{1}_{\feat_p(S)}$. As we can see in Table~\ref{tab:gt-auroc}, the AUROC score is very high in seven out of the eight settings, which suggests that \cref{ass:strongest_feature} indeed holds in these cases.

Interestingly, we observe that we get a low AUROC in the second experiment from \cref{tab:attack_params} (the one with a very large number of backdoored examples), which indicates that one of our assumptions does not hold in that case. To investigate the reason for that, we inspect the backdoor feature strength. \cref{fig:margin-lines-exp-2} shows that, for subsets of the training set containing 50\% of the training examples, the model output does not change as the number of poisoned samples increases, i.e., for these subsets, the backdoor feature strength is essentially 0. Consequently, \cref{ass:strongest_feature} indeed does not hold.
To fix this problem, we use smaller random subsets, i.e., ones containing 30\% of the training examples, when estimating the datamodels. In this new setting, the backdoor feature strength is significantly higher, and the AUROC between the poison indicator vector and the feature strenght product jumps to 99.34\%. For the remainder of the paper, we will use this setup.

\begin{figure}[!h]
    \centering
    \includegraphics[width=.55\linewidth]{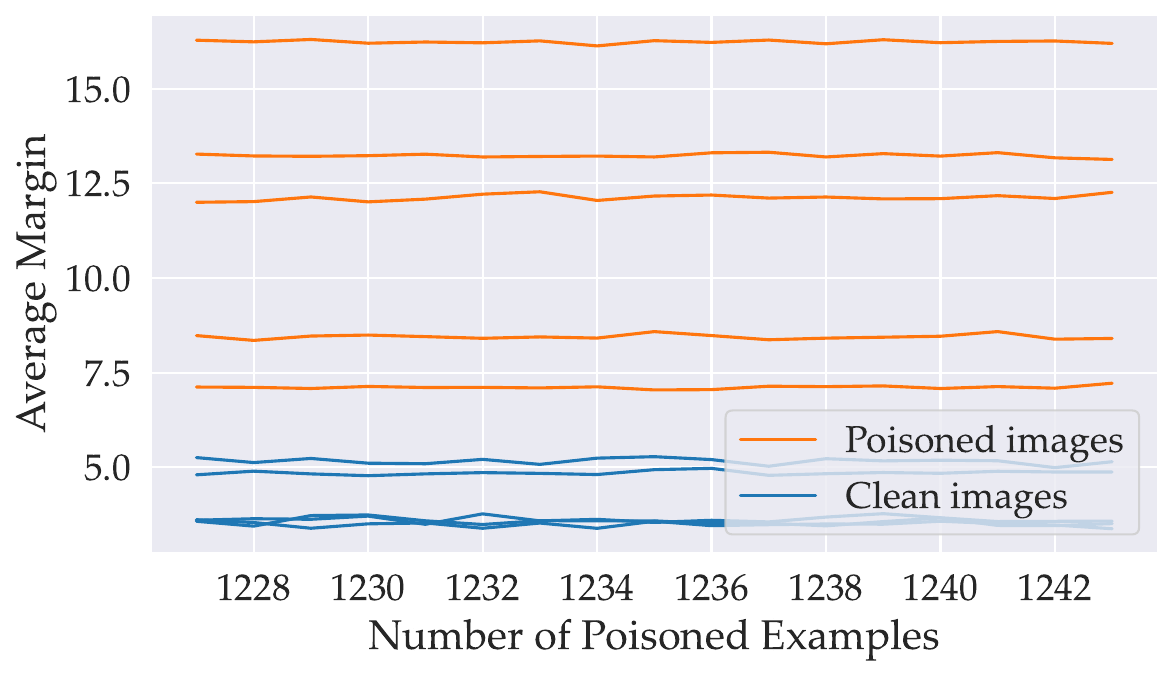}
\caption{
    {\bf Model output for different number of backdoor training examples}.
Each orange (resp. blue) line corresponds to a poisoned (resp. clean) example.
The $x$-value represents the number of backdoored examples present in the training set,
while the $y$-value represents the model output (average margin) at that specific example.
The rate of change of the model output
represents the feature strength. We observe that for backdoored examples (orange lines) from Exp. 2 (see \cref{tab:attack_params}), the model output does not
change as more training examples are poisoned. Consequently, the
backdoor feature strength is 0.
}
\label{fig:margin-lines-exp-2}
\end{figure}

\begin{table}[!h]
    \centering
    \begin{tabular}{@{}cccccccc@{}}
        \toprule
        {\bf E1} & {\bf E2} & {\bf E3} & {\bf E4} & {\bf E5} & {\bf E6} & {\bf E7} & {\bf E8} \\ \midrule
        99.9 & 60.9 & 98.0 & 97.7 & 99.9 & 99.9 & 97.0 & 98.3 \\ \bottomrule
    \end{tabular}
    \caption{
        AUROC of the backdoor feature strength and the backdoor examples indicator vector for our
        setups from \cref{tab:attack_params}.
        }
    \label{tab:gt-auroc}
\end{table}

\begin{table*}[!t]
    \begin{minipage}[t]{\linewidth}
        \setlength{\tabcolsep}{6pt}
        \renewcommand{\arraystretch}{1.2}
        \resizebox{\linewidth}{!}{
            \begin{tabular}{@{}ccccccccccccc@{}}
                \toprule
                \multirow{2}{*}{\textbf{Exp.}} & \multicolumn{2}{c}{\textbf{No Defense}}                 & \multicolumn{2}{c}{\textbf{AC}}                         & \multicolumn{2}{c}{\textbf{ISPL}}                       & \multicolumn{2}{c}{\textbf{SPECTRE}}                    & \multicolumn{2}{c}{\textbf{SS}}                         & \multicolumn{2}{c}{\textbf{Ours}}  \\ \cline{2-13}
                & \textit{Clean} & \textit{Poisoned} & \textit{Clean} & \textit{Poisoned} & \textit{Clean} & \textit{Poisoned} & \textit{Clean} & \textit{Poisoned} & \textit{Clean} & \textit{Poisoned} & \textit{Clean} & \textit{Poisoned} \\ \midrule
                1              & 86.64          & 19.90            & 86.76          & 19.68            & 86.13          & {\bf86.15}            & 86.71          & 20.17            & 85.52          & 30.99            & 85.05          & {\bf 85.06}            \\
                2              & 86.67          & 12.92            & 85.41          & 12.93            & 85.88          & {\bf85.82}            & -              & -                & 85.33          & 13.63            & 83.39          & {\bf83.13}             \\ \midrule
                3              & 86.39          & 49.57            & 86.25          & 48.85            & 86.32          & {\bf85.57}             & 86.28          & 45.32            & 85.22          & 78.22            & 84.82         & {\bf 84.11}            \\
                4              & 86.23          & 10.67            & 84.75          & 10.82            & 85.86           & {\bf85.18}            & -              & -                & 84.85           & 13.33            & 84.64          & {\bf 83.72}            \\ \midrule
                5              & 86.89          & 75.58            & 86.73          & 82.83            & 86.04          & {\bf85.89}            & 86.82          & 80.65            & 85.67          & {\bf85.41}            & 83.82          & {\bf 83.72}            \\
                6              & 87.11          & 41.89            & 86.85          & 51.05            & 86.18          & {\bf86.11}            & 86.97          & 51.18            & 85.68         & {\bf85.60}            & 84.88          & {\bf 84.79}            \\ \midrule
                7              & 87.02          & 71.68            & 86.90          & 73.28            & 86.50          & 82.31            & 86.72          & 76.97            & 85.70          & 82.70            & 84.19         & {\bf 84.02}            \\
                8              & 86.94          & 52.08            & 86.81          & 56.78            & 86.04          & 71.27            & 86.63          & 52.27            & 85.87          & 71.93            & 84.81          & {\bf 84.66}             \\            \bottomrule
            \end{tabular}
            }
            \caption{
                A summary of the model performances on a ``clean'' and ``poisoned'' validation sets after applying our method, as well as several baselines in the settings we consider. The high accuracy on both the clean and poisoned validation
                sets indicates the effectiveness of our defense against the considered backdoor attacks.
                }
            \label{tab:results_acc}
        \vspace{0.2in}

        \end{minipage}
        \label{tab:results_all}
\end{table*}

\subsection{Evaluating the effectiveness of our backdoor defense}
Let us evaluate the effectiveness of our method (see \cref{subsec:theory_alg}) in defending against backdoor attacks.

\paragraph{Evaluating our score.}
As a first step, we measure how well our scores predict the backdoored examples in our eight settings. Specifically, we compute our scores by running our local search algorithm from Section~\ref{subsec:theory_alg} on the datamodels matrix $\mathbf{W}$, then aggregating the results from the different runs. Following that, we check how well these scores correlate with the backdoor examples indicator vector $\bm{1}_{\feat_p(S)}$.  As Table~\ref{tab:alg-auroc} shows, there is a high correlation between these two quantities in all setups (cf. Section~\ref{subsec:verify-theory})\footnote{For Exp.~2 from \cref{tab:attack_params}, we are using the datamodels computed with $\alpha=30\%$, as described in Section~\ref{subsec:verify-theory}}. This high correlation suggests that our local search algorithm generates a score that is predictive of the backdoored examples.

\begin{table}[h]
    \centering
    \begin{tabular}{@{}cccccccc@{}}
        \toprule
        {\bf E1} & {\bf E2} & {\bf E3} & {\bf E4} & {\bf E5} & {\bf E6} & {\bf E7} & {\bf E8} \\ \midrule
        94.3 & 92.25 & 74.4 & 80.2 & 93.4 & 93.2 & 91.1 & 95.5 \\ \bottomrule
    \end{tabular}
    \caption{
        AUROC for our scores (see \cref{subsec:theory_alg}) and the backdoor indicator vector
        for our setups from \cref{tab:attack_params}.
        }
    \label{tab:alg-auroc}
\end{table}

\paragraph{Evaluating the effectiveness of our proposed defense.}
Given that our scores are predictive of the backdoor examples indicator vector,
we expect that removing the examples with the highest scores will be an effective
defense against the backdoor attack. To test this claim, for each of the backdoor attacks settings, we train a model on the backdoored dataset, and compute the accuracy of
this model on (a) the clean validation set, (b) and on the backdoored validation
set\footnote{By adding the trigger to all images of the clean validation set.}.
We then remove from the training set the examples corresponding to the
top 10\% of the scores\footnote{
    We remove 20\% in Exp.~2 from \cref{tab:attack_params} since the number of poisoned examples is larger.
},
train a new model on the resulting dataset, and then check the performance
of this new model on the clean and the fully-backdoored validation sets.
We also compare our detection algorithm with several baselines, including
Inverse Self-Paced Learning\footnote{We thank the authors for sharing their
implementation.} (ISPL)~\citep{jin2021poisonselfexpansion}, Spectral
Signatures\footnote{We re-implement the algorithm presented in the paper.}
(SS)~\citep{tran2018spectral},
SPECTRE\footnote{\href{https://github.com/SewoongLab/spectre-defense}{https://github.com/SewoongLab/spectre-defense}}~\citep{hayase2021spectre}
and Activation Clustering (AC)~\citep{chen2018detecting}.
\cref{tab:results_acc} shows that there is no substantial
drop in accuracy when evaluating the models trained on the training set curated using our approach.

    \section{Related Work}
    \label{sec:related_work}
    
Developing backdoor attacks and defenses in the context of deep learning is a
very active area of research~\citep{
gu2017badnets, tran2018spectral, chen2018detecting, turner2019label, saha2020hidden,
shokri2020bypassing, hayase2021spectre, qi2022circumventing, goldblum2022dataset,
goldwasser2022planting}
---see e.g.~\citep{li2022backdoor}
for a survey.
One popular approach to defending against backdoor attacks revolves around outlier detection
in the latent space of neural networks~\citep{tran2018spectral, chen2018detecting,
hayase2021spectre}. However, such defenses inherently fail in defending against adaptive attacks that
leave no trace in the latent space of backdoored models~\citep{shokri2020bypassing}.

Another line of work investigates certified defenses against backdoor
attacks~\citep{levine2021deeppartition, wang2022certifiedpoison}. To accomplish that, the proposed methods provide
certificates by training separate models on different partitions of the training set, and dropping the
models trained on data containing backdoored examples. This approach, however, significantly degrades the accuracy of the
trained model, and is only able to  certify accuracy when the number of backdoored examples is very small.

A number of prior works explore the applicability of influence-based
methods as defenses against different attacks in deep
learning~\citep{koh2017understanding}. To the best of our
knowledge, only \citet{lin2022measuring} discuss using such methods for
defending against backdoor attacks.
However, their defense requires knowledge of the attack
parameters that are typically unknown.
Closest to our work is that of~\citet{jin2021poisonselfexpansion},
who (similar to this work) consider a defense based on model behavior rather than
properties of any latent space.

    \section{Conclusion}
    \label{sec:conclusion}
    In this paper, we presented a new perspective on backdoor attacks. Specifically,
we argue that backdoor triggers are fundamentally
indistinguishable from existing features in the dataset.
Consequently, the task of detecting backdoored training examples becomes
equivalent to that of detecting strong features.
Based on this observation, we propose a primitive---and a corresponding algorithm---for
identifying and removing backdoored examples.
Through a wide range of backdoor attacks, we demonstrated the effectiveness of
our approach and its ability to retain high
accuracy of the resulting models.

    \section{Acknowledgments}
    \label{sec:acks}
    Work supported in part by the NSF grants CNS-1815221 and DMS-2134108, and Open Philanthropy. This material is based upon work supported by the Defense Advanced Research Projects Agency (DARPA) under Contract No. HR001120C0015.
We thank the MIT Supercloud cluster~\citep{reuther2018interactive} for providing computational resources that supported part of this work.

    \clearpage
    \printbibliography

    \clearpage

    \appendix

    \addcontentsline{toc}{section}{Appendix} %
    \part{Appendix} %
    \parttoc %

    \onecolumn
    \section{Model Predictions Formulation Used in Our Paper}
    \label{app:app_margin}
    In this work, we use the {\em margin function} (defined below) as the {\em model output function} $\modelout(z; S)$.
\begin{definition}[Margin function]
    For a dataset $S' \subset S$ and a fixed $z=(x, y) \in \mathcal{X} \times \mathcal{Y}$, the
    {\em margin function} $\modelout(x; S)$ is defined as
    $$\modelout(z; S') := \text{the correct-class margin on $z$ of a model trained on S'},$$
    where the correct-class margin is the logit of the correct class minus the largest
    incorrect logit.
\end{definition}
Intuitively, $\modelout(z; S')$ maps from an example $z$ and any
subset of the training dataset $S' \subset S$ to the correct-class margin on $z$
after training (using any fixed learning algorithm) on $S'$.

Here we focus on margins because of their (empirically observed) suitability for
ordinary least squares, as observed in~\citep[Appendix C]{ilyas2022datamodels}.
    \newpage

    \section{Proof of \cref{lemma:sensitivity-approx}}
    \label{app:proof_lem_1}
    \lemmaapprox*

We decompose the proof into two parts. In the first part, we show that we can
approximate the sensitivity $s_\phi(k)$ using datamodel weights
(\cref{lemma:datamodel-approximation}). In the second part, we relate $h$ to the
expressions involving datamodel weights in \cref{lemma:datamodel-approximation}
(\cref{lemma:datamodel-expectation-calculation}).

\begin{lemma}
\label{lemma:datamodel-approximation}
Let $\alpha \in (0,1)$, and $S'$ be a subset of $S$, sampled uniformly at
random, such that $|S'| = \alpha \cdot |S|$. Let $a:=\alpha\cdot|\featsupp(S)|$.
Suppose $\alpha$ is such that $c \leq \alpha \leq 1 - c$ for some absolute
constant $c\in (0, 1)$. Then, there exists a constant $C>0$ (dependending on
$c$) such that we have:
\begin{align} \label{eq:sens-diff-exp}
    & \left|
    s_\feat(a)
    - \E_{z\sim\featsupp(S)}\left[
     \E_{S' \sim \mathcal{D}_S}
    \left[w_z^\top \bm{1}_{S'}\,\bigg|\,|\featsupp(S')| = a + 1\right]
    + \E_{S' \sim \mathcal{D}_S}
    \left[w_z^\top \bm{1}_{S'}\,\bigg|\,|\featsupp(S')| = a\right]
    \right]
    \right| \leq \dmerror.
\end{align}
\end{lemma}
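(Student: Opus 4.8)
The plan is to reduce the statement to a bound on the datamodel approximation error \emph{conditioned} on the events $E_k := \{\,|\featsupp(S')| = k,\ z \notin S'\,\}$, and then to control that conditional error using \cref{ass:datamodel-accuracy} together with an anti-concentration (mode) estimate for the hypergeometric distribution. The conceptual difficulty is that \cref{ass:datamodel-accuracy} bounds the mean-squared datamodel error under the \emph{unconditional} draw $S' \sim \mathcal{D}_S$, whereas the feature-output function $g_\feat$ is defined through expectations conditioned on $E_k$; bridging the two requires a lower bound on $\P[E_k]$.

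First I would unfold the definitions. Writing $g_\feat(k) = \E_{z\sim\featsupp(S)}\bigl[\,\E_{S'}[\modelout(z;S') \mid E_k]\,\bigr]$ and recalling $s_\feat(a) = g_\feat(a+1) - g_\feat(a)$, I introduce the pointwise error $\Delta(z,S') := \E[\modelout(z;S')] - \bm{1}_{S'}^\top w_z$. By linearity of conditional expectation, each $g$-term splits into a \emph{datamodel part} (the conditional expectations of $w_z^\top \bm{1}_{S'}$ that appear in the statement, arising from $g_\feat(a+1)$ and $g_\feat(a)$ respectively) and an \emph{error part}. Hence the quantity to bound is exactly $\bigl|\E_z[\,\E_{S'}[\Delta \mid E_{a+1}] - \E_{S'}[\Delta \mid E_a]\,]\bigr|$, which by the triangle inequality is at most $\E_z\bigl[\,|\E_{S'}[\Delta\mid E_{a+1}]| + |\E_{S'}[\Delta\mid E_a]|\,\bigr]$. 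It therefore suffices to bound a single conditional error $|\E_{S'}[\Delta \mid E_k]|$ for $k\in\{a,a+1\}$, uniformly over $z \in \featsupp(S)$.

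For a fixed such term, Jensen (equivalently Cauchy--Schwarz) gives $|\E_{S'}[\Delta \mid E_k]| \le \sqrt{\E_{S'}[\Delta^2 \mid E_k]}$. Since $\Delta^2$ is non-negative, conditioning on an event of probability $\P[E_k]$ inflates its expectation by at most a factor $1/\P[E_k]$, i.e. $\E_{S'}[\Delta^2 \mid E_k] = \E_{S'}[\Delta^2 \1_{E_k}]/\P[E_k] \le \E_{S'}[\Delta^2]/\P[E_k] \le \varepsilon/\P[E_k]$ by \cref{ass:datamodel-accuracy} (applied per example $z$). Thus $|\E_{S'}[\Delta \mid E_k]| \le \sqrt{\varepsilon/\P[E_k]}$, and everything now hinges on lower-bounding $\P[E_k]$.

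The crux -- and the step I expect to be the main obstacle -- is this probability estimate. Since $S'$ is a uniformly random subset of size $\alpha n$, the count $|\featsupp(S')|$ is hypergeometric with mean $\alpha|\featsupp(S)| = a$; conditioning further on $z\notin S'$ (an event of probability $1-\alpha \ge c$) leaves a hypergeometric count on the remaining $m-1$ support elements whose mean is still within $O(1)$ of $a$. A local-limit / Stirling estimate for the hypergeometric pmf shows its value at the mode is $\Theta(1/\sigma)$ with $\sigma = \Theta\!\bigl(\sqrt{m(n-m)/n}\bigr) = O(\sqrt{n})$, and that $a$ and $a+1$, being $O(1)$ from the mean, both lie within a constant factor of the mode. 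This yields $\P[E_k] \ge c'/\sqrt{n}$ for a constant $c'$ depending only on $c$. Substituting gives $|\E_{S'}[\Delta \mid E_k]| \le \sqrt{\varepsilon\sqrt{n}/c'} = C'\varepsilon^{1/2}n^{1/4}$, and summing the two terms (the outer average over $z$ cannot increase the bound) yields the claim $\le \dmerror$ after absorbing constants into $C$. The delicate points here are (i) handling integrality of $a=\alpha m$ and treating the two adjacent values $a,a+1$ uniformly, and (ii) making the $\Theta(1/\sqrt{n})$ anti-concentration explicit enough that the dependence of $C$ on $c$ (through $c \le \alpha \le 1-c$, which keeps $\sigma$ of order $\sqrt{n}$ and keeps $\P[z\notin S']$ bounded below) is transparent.
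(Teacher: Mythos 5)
Your proposal is correct and follows essentially the same route as the paper's own proof: split $s_\feat(a)$ into the two conditional terms via the triangle inequality, pass to the conditional second moment by Jensen, inflate the unconditional datamodel error bound by $1/\Pr[\,|\featsupp(S')|=k\,]$, and lower-bound that hypergeometric probability by $\Omega(1/\sqrt{n})$ via Stirling to obtain the $\varepsilon^{1/2}n^{1/4}$ rate. If anything, your treatment is slightly more careful than the paper's in explicitly folding the event $z\notin S'$ into the conditioning and flagging the integrality of $a$, but these are refinements of the same argument rather than a different one.
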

\begin{proof}
\label{proof:lemma_dm_approx}

Recall that the sensitivity $s_\phi(k)$ is defined as:
\begin{align}
    s_{\feat}(k)  :=& g_{\feat}(k+1) - g_\feat(k)\\
    =& \E_{z\sim\featsupp(S)}\left[
     \E_{S' \sim \mathcal{D}_S, S'\not\ni z}
    \left[f(z;S')\,\bigg|\,|\featsupp(S')| = k+1\right]\nonumber
    -
    \E_{S' \sim \mathcal{D}_S, S'\not\ni z}
    \left[f(z;S')\,\bigg|\,|\featsupp(S')| = k\right]
    \right]
\end{align}

For convenience, assume that $a=\alpha \cdot |\featsupp(S)|$ is an integer.
First, by triangle inequality, it is enough to show that:
\begin{equation}
\label{eq:expec_lowerbound_1}
\begin{split}
&\left|
\E_{z\sim\featsupp(S)} \left[
\E_{S' \sim \mathcal{D}_S}
\left[\modelout(z; S')\,\bigg|\,|\featsupp(S')| = a\right]
-
\E_{S' \sim \mathcal{D}_S}
\left[w_z^\top \bm{1}_{S'}\,\bigg|\,|\featsupp(S')| = a\right]
\right]
\right| \leq \frac{1}{2}\cdot\dmerror
\end{split}
\end{equation}
and
\begin{equation}
    \label{eq:expec_lowerbound_2}
\begin{split}
&\left|
\E_{z\sim\featsupp(S)} \left[
\E_{S' \sim \mathcal{D}_S}
\left[\modelout(z; S')\,\bigg|\,|\featsupp(S')| = a + 1\right]
-
\E_{S' \sim \mathcal{D}_S}
\left[w_z^\top \bm{1}_{S'}\,\bigg|\,|\featsupp(S')| = a + 1\right]
\right]
\right| \leq \frac{1}{2} \cdot \dmerror.
\end{split}
\end{equation}

We address~\cref{eq:expec_lowerbound_1}, and the bound
for~\cref{eq:expec_lowerbound_2} follows analogously.

To address \cref{eq:expec_lowerbound_1}, we will show a stronger (per-example)
statement:
\begin{equation}
\label{eq:without_z}
\begin{split}
&\left|
\E_{S' \sim \mathcal{D}_S}
\left[\modelout(z; S')\,\bigg|\,|\featsupp(S')| = a\right]
-
\E_{S' \sim \mathcal{D}_S}
\left[w_z^\top \bm{1}_{S'}\,\bigg|\,|\featsupp(S')| = a\right]
\right| \leq \frac{1}{2}\cdot\dmerror
\end{split}
\end{equation}
Trivially, this implies that the expectation over $z$ is also bounded from above by $\frac{1}{2}\cdot\dmerror$.
To show that \cref{eq:without_z} holds, we first compute the probability of a
random subset $S'$ containing $a$ poisoned samples, and then show an upper bound
for~\cref{eq:expec_lowerbound_1} leveraging the derived probability by using the
definition of conditional expectation.

By directly counting, we have that
\begin{align*}
    \Pr_{S'\sim \mathcal{D}_S}
    {\left[|\featsupp(S')| = a\right]} =
    \frac{
            {\,|\featsupp(S)|\choose a}
            \binom{\,|S|-|\featsupp(S)|}{\alpha\cdot |S|-a}
        }
        {
            \binom{\,|S|}{\alpha \cdot |S|}
        }.
\end{align*}
To ease notation, let $n:=|S|$ and $p:=|\featsupp(S)|$, thus $a=\alpha p$.
Rewriting, we have
\begin{align*}
    \Pr_{S'\sim \mathcal{D}_S}
    {\left[|\featsupp(S')| = \alpha p \right]} =
    \frac{
            \binom{p}{\alpha p}
            \binom{\,n-p}{\alpha (n - p)}
        }
        {
            \binom{\,n}{\alpha n}
        }.
\end{align*}

Next,
\begin{align*}
    &\Pr_{S'\sim \mathcal{D}_S}{\left[|\featsupp(S')| = \alpha p + 1\right]} =
    \frac{
        {p\choose \alpha p + 1}{n-p\choose \alpha(n-p) - 1}
        }
        {{n\choose \alpha n}}\\
    &= \left(\frac{p(1-\alpha)}{\alpha p+1}\cdot\frac{\alpha(n-p)}{(1 - \alpha)(n-p) + 1}\right) \cdot
    \Pr_{S'\sim \mathcal{D}_S}{\left[|\featsupp(S')| = \alpha p\right]}
\end{align*}
We first show that the ratio of the two probabilities is bounded by a constant, i.e.,
\begin{align*}
    &\frac{p(1-\alpha)}{\alpha p+1}\cdot\frac{\alpha(n-p)}{(1 - \alpha)(n-p) + 1} \\
    &= \frac{\alpha}{\alpha + \frac{1}{p}} \cdot \frac{1-\alpha}{1-\alpha + \frac{1}{n-p}}
    \\
    &\geq \frac{\alpha}{\alpha + \alpha} \cdot \frac{1-\alpha}{2-\alpha}
    \geq \frac{1}{2} \cdot \frac{c}{2-c}
\end{align*}
where we used that $1\leq \alpha p$ and $c \leq \alpha \leq 1
-c$.
Thus
\begin{align*}
    \Pr_{S'\sim \mathcal{D}_S}{\left[|\featsupp(S')| = \alpha p + 1\right]}
    \geq
    \frac{c}{2(2-c)}
    \Pr_{S'\sim \mathcal{D}_S}{\left[|\featsupp(S')| = \alpha p\right]}.
\end{align*}
Now we proceed with bounding $\Pr_{S'\sim \mathcal{D}_S}{\left[|\featsupp(S')| =
\alpha p\right]}$. Using Stirling's approximation, we have
\begin{align*}
&\Pr_{S'\sim \mathcal{D}_S}{\left[|\featsupp(S')| = \alpha p\right]} \asymp
\sqrt{\frac{n}{p(n-p)}\frac{1}{\alpha(1-\alpha)}} \geq \frac{2}{C^2\cdot \sqrt{n}}
\end{align*}
for some constant $C>0$. Now from the triangle inequality, Jensen's inequality and
Markov's inequality we have that for sufficiently large $n$
\begin{align*}
&\mathbb{E}_{S' \sim \mathcal{D}_S}\left[\modelout(z; S')\,\bigg|\,|\featsupp(S')| = \alpha p\right]
- \mathbb{E}_{S' \sim \mathcal{D}_S}\left[w_z^\top \bm{1}_{S'}\,\bigg|\,|\featsupp(S')| = \alpha p\right] \\
&\leq \mathbb{E}_{S' \sim \mathcal{D}_S}\left[\left|\modelout(z; S') - w_z^\top \bm{1}_{S'}\right|\,\bigg|\,|\featsupp(S')| = \alpha p\right] \\
&\leq \sqrt{
    \mathbb{E}_{S' \sim \mathcal{D}_S}
    \left[
            \left(\modelout(z; S') - w_z^\top \bm{1}_{S'}\right)^2\,\bigg|\,|\featsupp(S')| = \alpha p
    \right]
    } \\
&\leq \frac{1}{2} \dmerror.
\end{align*}
The case for \cref{eq:expec_lowerbound_2} is analogous.
\end{proof}

Next, we show that $w_z^\top h(\bm{1}_{\featsupp(S)})$ corresponds to the desired difference
of conditional expectations. In this proof, we let $h_\feat = h(\bm{1}_{\featsupp(S)})$ for brevity.

\begin{lemma}
\label{lemma:datamodel-expectation-calculation}
We have for every $x\in S$ that
\begin{align*}
\E_{z\sim \featsupp(S)}\left[
\E_{S' \sim \mathcal{D}_S}
\left[w_z^\top \bm{1}_{S'}\,\bigg|\,|\featsupp(S')| = \alpha\cdot |\featsupp(S)| + 1\right]
-
\E_{S' \sim \mathcal{D}_S}
\left[w_z^\top\bm{1}_{S'}\,\bigg|\,|\featsupp(S')| = \alpha\cdot |\featsupp(S)|\right]
\right]
= \E_{z\sim\featsupp(S)} w_z^\top h_\feat.
\end{align*}
\end{lemma}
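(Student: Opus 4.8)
The plan is to reduce to a per-example identity and then compute the relevant conditional inclusion probabilities directly by symmetry. Since the feature expectation $\E_{z \sim \featsupp(S)}[\cdot]$ appears identically on both sides and everything in sight is linear in $w_z$, it suffices to fix a single $z$ and show
\[
\E_{S'}\left[w_z^\top \bm{1}_{S'} \,\middle|\, |\featsupp(S')| = \alpha p + 1\right] - \E_{S'}\left[w_z^\top \bm{1}_{S'} \,\middle|\, |\featsupp(S')| = \alpha p\right] = w_z^\top h_\feat,
\]
where $p := |\featsupp(S)|$. Expanding $w_z^\top \bm{1}_{S'} = \sum_{i=1}^n (w_z)_i \, \indicator{i \in S'}$ and using linearity of expectation, the left-hand side becomes $\sum_i (w_z)_i \bigl(q_i(\alpha p + 1) - q_i(\alpha p)\bigr)$, where $q_i(m) := \Pr_{S'}[i \in S' \mid |\featsupp(S')| = m]$. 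The whole lemma thus reduces to computing $q_i(m)$ coordinate-by-coordinate.

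The key step is to evaluate these conditional inclusion probabilities via exchangeability. Under $\mathcal{D}_S$ the set $S'$ is a uniformly random subset of $S$ of size $\alpha n$, so conditioning on $\{|\featsupp(S')| = m\}$ makes $S'$ a uniform choice of $m$ indices among the $p$ feature-carrying indices together with $\alpha n - m$ indices among the $n - p$ remaining ones. By symmetry among the feature indices, each $i \in \featsupp(S)$ is included with probability $q_i(m) = m / p$; by symmetry among the non-feature indices, each $i \notin \featsupp(S)$ is included with probability $q_i(m) = (\alpha n - m)/(n - p)$.

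Taking the difference between $m = \alpha p + 1$ and $m = \alpha p$ then telescopes cleanly: for $i \in \featsupp(S)$ we get $q_i(\alpha p + 1) - q_i(\alpha p) = 1/p$, and for $i \notin \featsupp(S)$ we get $-1/(n-p)$. These are precisely the coordinates of
\[
h_\feat = h\bigl(\bm{1}_{\featsupp(S)}\bigr) = \frac{1}{p}\,\bm{1}_{\featsupp(S)} - \frac{1}{n-p}\,\bigl(\ones - \bm{1}_{\featsupp(S)}\bigr),
\]
since $\|\bm{1}_{\featsupp(S)}\|_1 = p$. Hence $\sum_i (w_z)_i \bigl(q_i(\alpha p + 1) - q_i(\alpha p)\bigr) = w_z^\top h_\feat$, and re-introducing $\E_{z \sim \featsupp(S)}$ yields the claim. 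The only point needing care is the exchangeability argument: conditioning on $|\featsupp(S')| = m$ must not break the symmetry within either block of indices. This holds because the conditioning event depends on $S'$ only through its count of feature indices, so any permutation of indices within $\featsupp(S)$ (or within its complement) preserves both the uniform law of $S'$ and the event, leaving $q_i(m)$ constant across each block.
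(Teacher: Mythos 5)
Your proposal is correct and follows essentially the same route as the paper's proof: reduce to a single example by linearity, expand $w_z^\top \bm{1}_{S'}$ coordinate-wise, compute the conditional inclusion probabilities ($\alpha p/p$ versus $(\alpha p+1)/p$ on the feature block, $\alpha(n-p)/(n-p)$ versus $(\alpha(n-p)-1)/(n-p)$ off it), and observe that the difference telescopes to the coordinates of $h_\feat$. The only cosmetic difference is that you justify the inclusion probabilities by an exchangeability argument where the paper counts subsets with binomial coefficients; both yield the same hypergeometric-type quantities.
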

\begin{proof}
\label{proof:dm_expec}
Again, let us consider the case for a single example $z$, and let $n=|S|, p=|\featsupp(S)|$.
Then we can write
\begin{align*}
\mathbb{E}_{S' \sim \mathcal{D}_S}\left[w_z^\top \bm{1}_{S'}\,\bigg|\,|\featsupp(S')| = \alpha p\right]
&= \E_{S'\sim \mathcal{D}_S}\left[\sum_{z\in S}\bm{1}_{z\in S'}w_{xz} \bigg|\, |\featsupp(S')| = \alpha p\right] \\
&= \sum_{z\in S}^{}\Pr_{S'\sim \mathcal{D}_S}{\left[z\in S' \bigg|\, |\featsupp(S')| = \alpha p \right]}\cdot w_{xz}.
\end{align*}
There are a total of
\begin{align*}
    {p\choose \alpha p}{n-p\choose \alpha (n-p)}
\end{align*}
sets satisfying $|\featsupp(S')| = \alpha p$. Among these, given that the sample
$z$ contains the feature $\feat$, i.e., $\feat(z) = 1$, there are
\begin{align*}
    {p-1\choose \alpha p-1}{n-p\choose \alpha(n-p)}
\end{align*}
random subsets containing $z$. So for all $z$ containing $\feat$ we have
\begin{align*}
    \Pr_{}{\left[z\in S' \bigg|\, \feat(z)=1, |\featsupp(S')| = \alpha p \right]} = \frac{\alpha p}{p}.
\end{align*}
Similarly, for all samples $z$ that do not contain $\feat$ ,i.e,
$\feat(z) = 0$, we have that
\begin{align*}
    \Pr_{}{\left[z\in S' \bigg|\, \feat(z)=0, |\featsupp(S')| = \alpha p \right]} = \frac{\alpha (n-p)}{n-p}.
\end{align*}
Thus, overall
\begin{align*}
\E_{S' \sim \mathcal{D}_S}\left[w_z^\top \bm{1}_{S'}\,\bigg|\,|\featsupp(S')| = \alpha p\right] & =
\frac{\alpha p}{p}w_z^\top \bm{1}_{\feat(S)} + \frac{\alpha(n-p)}{n-p}w_z^\top (1 - \bm{1}_{\feat(S)}) \nonumber \\
& = w_z^\top \left( \frac{\alpha p}{p} \bm{1}_{\feat(S)} + \frac{\alpha(n-p)}{n-p} (1 - \bm{1}_{\feat(S)}) \right) \\
& = w_z^\top \left( \frac{\alpha p}{p} \bm{1}_{\feat(S)} + \frac{\alpha(n-p)}{n-p} (1-\bm{1}_{\feat(S)}) \right)
\end{align*}
Analogously,
\begin{align*}
\mathbb{E}_{S' \sim \mathcal{D}_S}\left[w_z^\top \bm{1}_{S'}\,\bigg|\,|\featsupp(S')| = \alpha p + 1\right] &=
\frac{\alpha p + 1}{p}w_z^\top \bm{1}_{\feat(S)} + \frac{\alpha(n-p) - 1}{n-p}w_z^\top (1 - \bm{1}_{\feat(S)}) \\
& = w_z^\top \left( \frac{\alpha p+1}{p} \bm{1}_{\feat(S)} + \frac{\alpha(n-p)-1}{n-p} (1 - \bm{1}_{\feat(S)}) \right) \\
& = w_z^\top \left( \frac{\alpha p+1}{p} \bm{1}_{\feat(S)} + \frac{\alpha(n-p)-1}{n-p} (1-\bm{1}_{\feat(S)}) \right)
\end{align*}

We then subtract the two terms to get:
\begin{align*}
    \mathbb{E}_{S' \sim \mathcal{D}_S}\left[w_z^\top \bm{1}_{S'}\,\bigg|\,|\featsupp(S')| = \alpha p + 1\right] - \mathbb{E}_{S' \sim \mathcal{D}_S}\left[w_z^\top \bm{1}_{S'}\,\bigg|\,|\featsupp(S')| = \alpha p \right] \nonumber
    & = w_z^\top \left( \frac{1}{p} \bm{1}_{\feat(S)} - \frac{1}{n-p} (1-\bm{1}_{\feat(S)}) \right) \\
    & = w_z^\top h(\bm{1}_{\phi(S)})
    \end{align*}
Finally, we get the desired results over all examples $z\in\featsupp(S)$ by
directly averaging.
\end{proof}

\begin{proof}[Proof of \cref{lemma:sensitivity-approx}]
The proof of \cref{lemma:sensitivity-approx} follows by combining the results of Lemma
\ref{lemma:datamodel-approximation} and Lemma
\ref{lemma:datamodel-expectation-calculation}:

\begin{align*}
    & \left|
    s_\feat(\alpha \cdot |\featsupp(S)|)
    - \E_{z\sim \featsupp(S)}w_z^\top h(\bm{1}_{\phi(S)})
    \right| \\
    & = \left|
    s_\feat(\alpha \cdot |\featsupp(S)|)
    -
    \E_{z\sim \featsupp(S)} \left[
    \E_{S' \sim \mathcal{D}_S}\left[w_z^\top \bm{1}_{S'}\,\bigg|\,|\featsupp(S')| = \alpha \cdot |\featsupp(S)| + 1\right]
    +\E_{S' \sim \mathcal{D}_S}\left[w_z^\top \bm{1}_{S'}\,\bigg|\,|\featsupp(S')| = \alpha \cdot |\featsupp(S)|\right]
    \right]
    \right| \\
    & \leq \left|
    \E_{z\sim \featsupp(S)} \left[
    \mathbb{E}_{S' \sim \mathcal{D}_S}\left[\modelout(z; S')\,\bigg|\,|\featsupp(S')| = \alpha \cdot |\featsupp(S)|+1\right]
    - \mathbb{E}_{S' \sim \mathcal{D}_S}\left[w_z^\top \bm{1}_{S'}\,\bigg|\,|\featsupp(S')| = \alpha \cdot |\featsupp(S)|+1\right]\right] \right| \\
    & + \left|
    \E_{z\sim \featsupp(S)} \left[
    \mathbb{E}_{S' \sim \mathcal{D}_S}\left[\modelout(z; S')\,\bigg|\,|\featsupp(S')| = \alpha\cdot |\featsupp(S)|\right]
- \mathbb{E}_{S' \sim \mathcal{D}_S}\left[w_z^\top \bm{1}_{S'}\,\bigg|\,|\featsupp(S')| = \alpha \cdot|\featsupp(S)|\right] \right] \right| \\
& \leq 2\cdot\frac{1}{2}\cdot \dmerror = \dmerror
\end{align*}
\end{proof}
    \newpage

    \section{Proof of \cref{lemma:maximizer-is-indicator}}
    \label{app:proof_lem_2}
    \lemmamaxim*

\begin{proof}
    The result follows directly from \cref{ass:strongest_feature} and
    \cref{lemma:sensitivity-approx}. In particular,
    let $\feat_v$ be a feature whose corresponding support $\featsupp_v(S)$ is of size $p$.

\begin{align*}
    h(v)^\top \W v & = \left( \frac{1}{n}\cdot v - \frac{1}{n - p}\cdot (\ones - v) \right)^\top \W v \\
    & =
    \begin{bmatrix}
         h_v^\top w_{z_1} &
         h_v^\top w_{z_2} &
        \ldots &
         h_v^\top w_{z_n}
    \end{bmatrix} \cdot v \\
    & = \sum_{z \in \featsupp_v(S)} h_v^\top w_{z}.
\end{align*}

First, from \cref{lemma:sensitivity-approx}, we have that:
\begin{equation*}
    \sum_{z\in \featsupp_v(S)} h_v^\top w_{z} \leq
     \sum_{z\in \featsupp_v(S)}  s_v(z, \alpha \|v\|_1) + pC^* \varepsilon^{1/2} n^{1/4}
\end{equation*}

Now let $v_p$ be the indicator vector for the poisoned examples. We similarly have from~\cref{lemma:sensitivity-approx}:
\begin{equation*}
    \sum_{z\in \featsupp_p(S)} h_p^\top w_{z} \geq
    \sum_{z\in \featsupp_p(S)}  s_p(z, \alpha \|v\|_1) - pC^* \varepsilon^{1/2} n^{1/4}
\end{equation*}

Thus for any $v\neq v_p$ we have that
\begin{align*}
h_p^\top \W v_p - h_v^\top \W v & = \sum_{z\in \featsupp_p(S)} h_p^\top w_{z} - \sum_{z\in \featsupp_v(S)} h_v^\top w_{z} \\
& \geq
\sum_{z\in \featsupp_{_p}(S)}  s_{v_p}(z, \alpha \|v_p\|_1) -
\sum_{z\in \featsupp_v(S)}  s_v(z, \alpha \|v\|_1)
- 2pC^* \varepsilon^{1/2} n^{1/4}
\end{align*}

We now use~\cref{ass:strongest_feature} that states:

\begin{equation*}
    \sum_{z \in \featsupp_p(S)}
    s_{\feat_p}\left(z, \alpha\cdot p\right)
    -  \sum_{z \in \featsupp(S)} s_{\feat}(z, \alpha\cdot p)
    \geq \delta^*
\end{equation*}

By combining these two inequalities, we obtain:

\begin{align*}
    h_p^\top \W v_p - h_v^\top \W v & = \sum_{z\in \featsupp_p(S)} h_p^\top w_{z} - \sum_{z\in \featsupp_v(S)} h_v^\top w_{z} \\
    & \geq
    \sum_{z\in \featsupp_{_p}(S)}  s_{v_p}(z, \alpha \|v_p\|_1) -
    \sum_{z\in \featsupp_v(S)}  s_v(z, \alpha \|v\|_1)
     - 2pC^* \varepsilon^{1/2} n^{1/4} \\
    & \geq \delta^* - 2pC^* \varepsilon^{1/2} n^{1/4}
\end{align*}

This concludes the proof that the solution of the optimization program in~\cref{eq:maxsum-sub} is the poison indicator vector $v_p$, as long as $\delta^* > 2pC^* \varepsilon^{1/2} n^{1/4}$.

\end{proof}
    \newpage

    \section{Experimental Setup}
    \label{app:exp_setup}

\begin{figure}[htbp]
    \includegraphics[width=\linewidth]{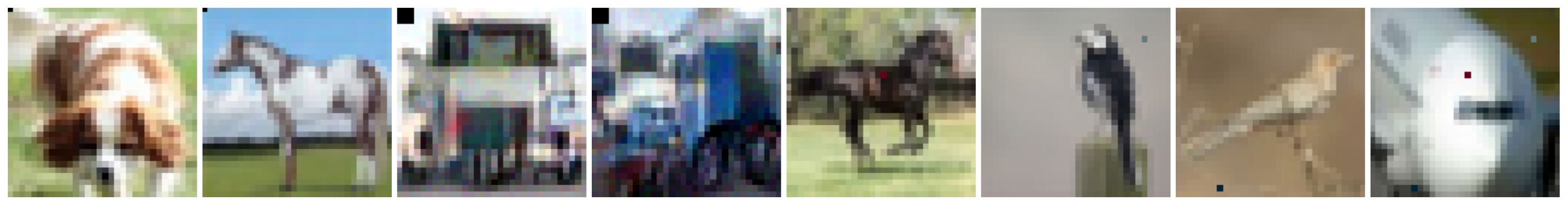}
    \caption{We execute the poisoning attacks with three types of triggers: (a) one black pixel on top left corner (first two images), (b) 3x3 black square on top left corner (third and fourth images), and (c) 3-way triggers adapted from~\citep{xie2020dba} (last four images).}
    \label{fig:triggers}
\end{figure}

\subsection{Backdoor Attacks}

\paragraph{Dirty-Label Backdoor Attacks.} The most prominent type of backdoor attacks is a dirty-label attack~\citep{gu2017badnets}. During a dirty-label attack, the adversary inserts a trigger into a subset of the training set, then flips the label of the poisoned samples to a particular target class $y_b$. We mount four different dirty-label attacks, by considering two different triggers, and two different levels of poisoning (cf. Exp.~1 to 4 in Table~\ref{tab:attack_params}).

\paragraph{Clean-Label Backdoor Attacks.} A more challenging attack is the clean-label attack~\citep{shafahi2018poison, turner2019label}\footnote{We evaluate the clean-label attack as presented in~\citep{turner2019label}} where the adversary avoids changing the label of the poisoned samples. To mount a successful clean-label attack, the adversary poisons samples from the target class only, hoping to create a strong correlation between the target class and the trigger.

We perform two types of clean-label attacks. During the first type (Exp. 5 and 6 from Table~\ref{tab:attack_params}), we perturb the image with an adversarial example before inserting the trigger, as presented in \citep{turner2019label}.
During the second type of clean-label attacks (Exp. 7 and 8 from Table~\ref{tab:attack_params}), we avoid adding the adversarial example, however, we poison more samples to have an effective attack.

\paragraph{Trigger.} We conduct our experiments with two types of triggers. The first type is a fixed pattern inserted in the top left corner of the image. The trigger is unchanged between train and test time. This type of trigger has been used in multiple works~\citep{gu2017badnets,turner2019label}. The other type of trigger is an m-way trigger, with m=3~\citep{xie2020dba}. During training, one of three triggers is chosen at random for each image to be poisoned, and then the trigger is inserted into one of three locations in the image. At test time, all three triggers are inserted at the corresponding positions to reinforce the signal. We display in Figure~\ref{fig:triggers} the triggers used to poison the dataset.

\subsection{Training Setup}

\paragraph{Training CIFAR models.} In this paper, we train a large number of models on different subsets of CIFAR-10 in order to compute the datamodels. To this end, we use the ResNet-9 architecture~\citep{he2015residual}\footnote{\href{https://github.com/wbaek/torchskeleton/blob/master/bin/dawnbench/cifar10.py}{https://github.com/wbaek/torchskeleton/blob/master/bin/dawnbench/cifar10.py}}. This smaller version of ResNets was optimized for fast training.

\paragraph{Training details.}
We fix the training procedure for all our models. We show the hyperparameter details in Table~\ref{tab:hyperparams}\footnote{Our implementation and configuration files will be available in our code.}. One augmentation was used for dirty-label attacks (Cutout \citep{devries2017cutout}) to improve the performance of the model on CIFAR10. Similar to~\citep{turner2019label}, we do not use any data augmentation when performing clean-label attacks.

\begin{table}
    \centering
    \begin{tabular}{@{}ccccccc@{}}
        \toprule
        {\bf Optimizer} & {\bf Epochs} & {\bf Batch Size} & {\bf Peak LR} & {\bf Cyclic LR Peak Epoch} & {\bf Momentum} & {\bf Weight Decay} \\ \midrule
        SGD & 24 & 1,024 & 0 & 5 & 0.9 & 4e-5 \\ \bottomrule
    \end{tabular}
    \caption{Hyperparameters used to train ResNet-9 on CIFAR10.}
    \label{tab:hyperparams}
\end{table}

\paragraph{Performance.} In order to train a large number of models, we use the FFCV library for efficient data-loading~\citep{leclerc2022ffcv}.  The speedup from using FFCV allows us to train a model to convergence in $\sim$40 seconds, and 100k models for each experiment using 16 V100 in roughly 1 day\footnote{We train 3 models in parallel on every V100.}.

\paragraph{Computing datamodels.} We adopt the framework presented in~\citep{ilyas2022datamodels} to compute the datamodels of each experiment. Specifically, we train 100k models on different subsets containing 50\% of the training set chosen at random. We then compute the datamodels as described in~\citep{ilyas2022datamodels}.

\paragraph{Local Search.} We approximate the solution of the problem outlined in \eqref{eq:maxsum-sub} using a local search heuristic presented in \citep{kernighan70}. We iterate over ten sizes for the poison mask: \{10, 20, 40, 80, 160, 320, 640, 1280, 2560, 5120\}. For each size, we collect 1,000 different solutions by starting from different initialization of the solution.

\subsection{Estimating Theoretical Quantities}

Recall the average margin definition presented in \eqref{eq:k-output}. In particular:
\begin{equation}
    g_\feat(k) =
    \mathbb{E}_{z \sim \Phi(S)}\left[
        \mathbb{E}_{S' \sim \mathcal{D}_S}
            \left[f(z; S')\bigg||\featsupp(S')|=k, z \not\in S'\right]
    \right]
\end{equation}
where $S'$ is a subset of the training set, $f(z; S')$ is the margin of the model on example $z$ when trained on the dataset $S'$, $\featsupp(S')$ is the subset of the set $S'$ containing the poisoned feature, and $k$ is the number of poisoned samples. Estimating the average margins requires training a large number of models on different subsets, and measure--for every sample $z$ and every number of poisoned samples $k$--the margins of the trained model.

For the purpose of this paper, we leverage the datamodels computation framework to estimate these averages. In particular, to compute the datamodels weights, we train a large number of models on different subsets $S_1, S_2, \ldots, S_T$ of the training set $S$\footnote{We refer the reader to~\citep{ilyas2022datamodels} for more details.}. For every subset $S_i$, we record the number of poisoned samples in the subset, then we estimate the average margin by averaging the margins over the different subsets that contain $k$ poisoned samples.
\begin{subequations}
    \begin{equation}
        N_\phi(z,k) = \sum_{i=1}^T \bm{1}_{\left(|\featsupp(S_i)|=k \right) \, \land \, \left( z \notin S_i\right)}
    \end{equation}

    \begin{equation}
        \mathbb{E}_{S' \sim \mathcal{D}_S}
            \left[f(z; S')\bigg||\featsupp(S')|=k, z \not\in S'\right] \approx \frac{1}{N_\phi(k)} \sum_{i=1}^T f(z; S_i) \cdot \bm{1}_{\left(|\featsupp(S_i)|=k \right) \, \land \, \left( z \notin S_i\right)}
    \end{equation}

    \begin{equation}
        g_\feat(k) \approx \frac{1}{|\featsupp(S)|} \sum_{z\in \featsupp(S)} \frac{1}{N_\phi(z,k)} \sum_{i=1}^T f(z; S_i) \cdot \bm{1}_{\left(|\featsupp(S_i)|=k \right) \, \land \, \left( z \notin S_i\right)}
    \end{equation}

\end{subequations}
By a training 100k models on different subsets of the dataset, we obtain reasonable estimates of the marginal effects for every sample $z=(x,y)$.
    \newpage

    \section{Omitted Plots}
    \label{app:plots}
    \subsection{Average Margin Plots}
\label{app:more_margin}
In this section, we show for all the experiments the plots of the average margin for clean and poisoned samples as a function of the number of poisoned samples in the dataset (cf. Fig.~\ref{fig:margin-lines} in the main paper).

\begin{figure}[h]

    \includegraphics[width=\linewidth]{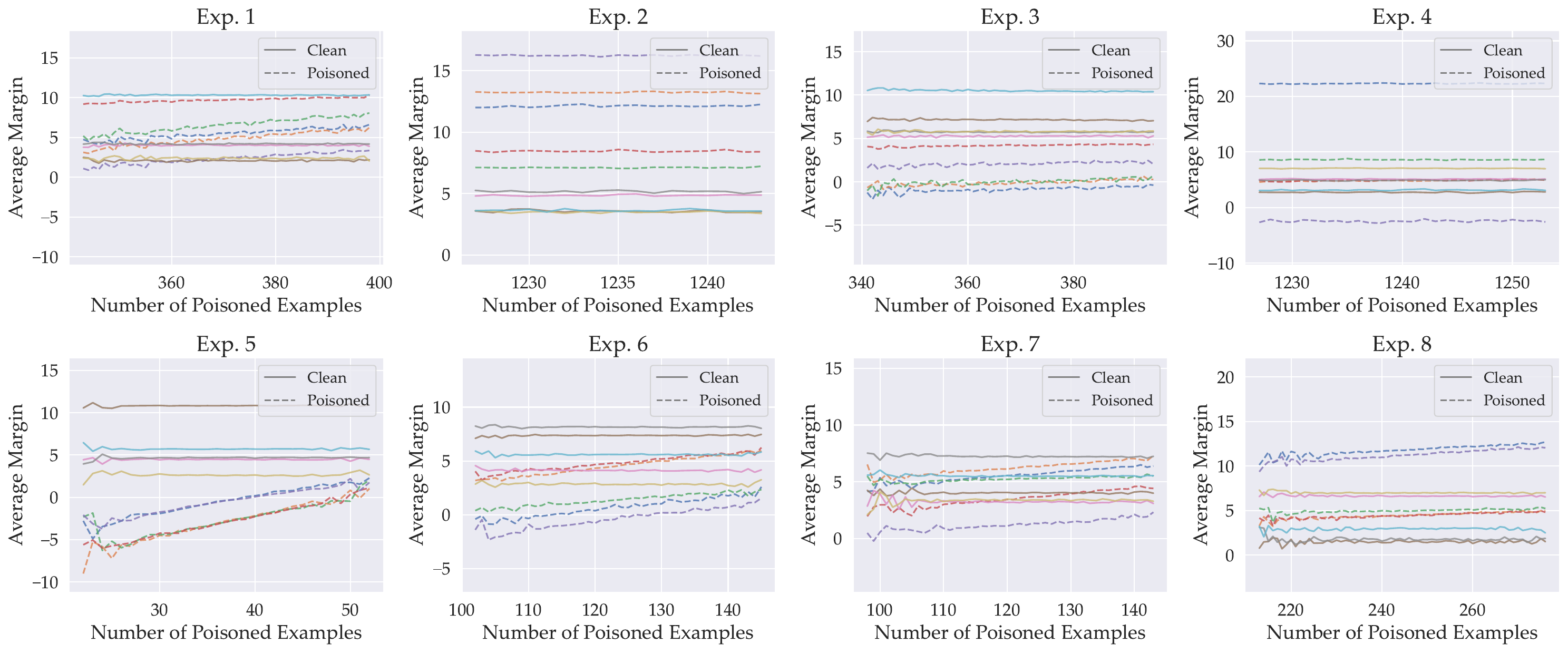}

    \caption{We plot for all the experiments the average margin for five clean samples (left) and five poisoned samples (right) as the number of poisoned samples in the training set increases.
    We observe that the average margin for {\it clean samples} (without the trigger) is constant when poisoning
    more samples in the dataset. In contrast, the average margin for {\it poisoned samples} (with the trigger)
    increases when the number of poisoned samples increases in the dataset, confirming our assumptions.}
    \label{fig:all-margin-lines}
\end{figure}

\subsection{Estimated Sensitivities Plots}

In this section, we show for all the experiments the plots of the estimated sensitivities, and the approximation we obtain using datamodels (cf. Fig.~\ref{fig:theory-sensitivity} in the main paper).

\begin{figure}[h]

    \includegraphics[width=\linewidth]{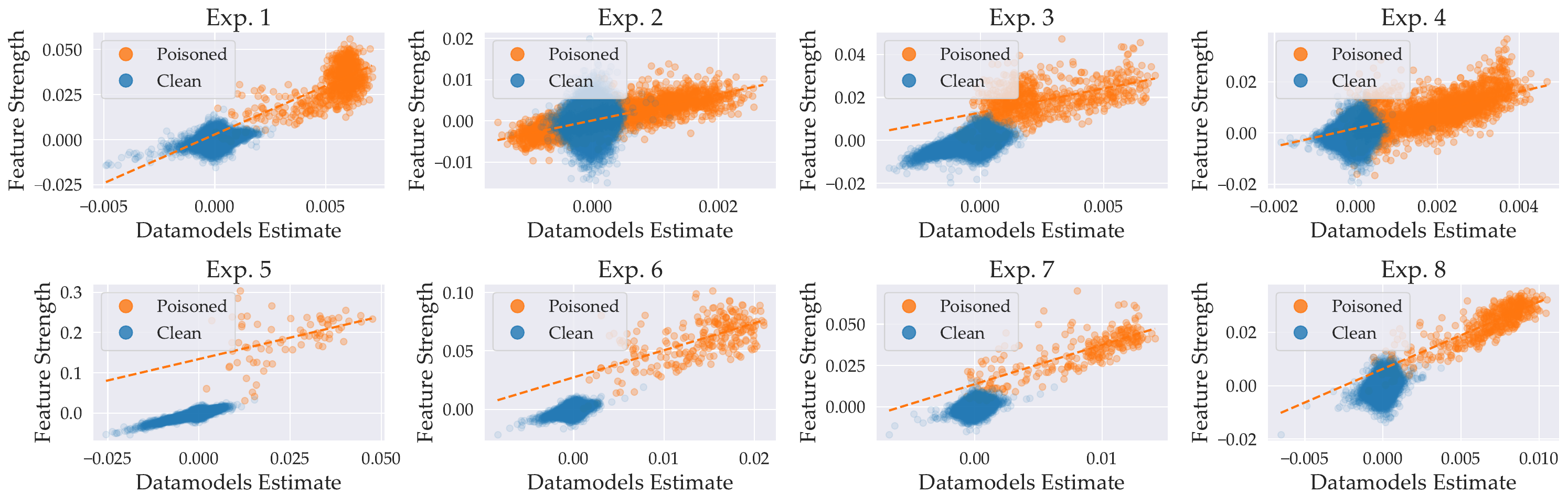}

    \caption{We plot for all the experiments the estimated marginal sensitivities and the approximation with datamodels presented in Equation~\ref{eq:approx-sensitivity}. We observe for poisoned samples (in red) a good linear correlation between the sensitivities and the datamodels' approximation. Additionally, we observe no noticeable correlation for clean samples (in green).}
    \label{fig:all-theory-sensistivity}
\end{figure}

\subsection{Distribution of Datamodels Values}

In this section, we plot for each experiment the distribution of the datamodels weights for all experiments. In particular, recall that the datamodels weight $w_x[i]$ represents the influence of the training sample $x_i$ on the sample $x$. We show below the distribution of the effect of 1) poisoned samples on poisoned samples, 2) the poisoned samples on the clean samples and 4) the clean samples on the clean samples.

\begin{figure}[h]

    \includegraphics[width=\linewidth]{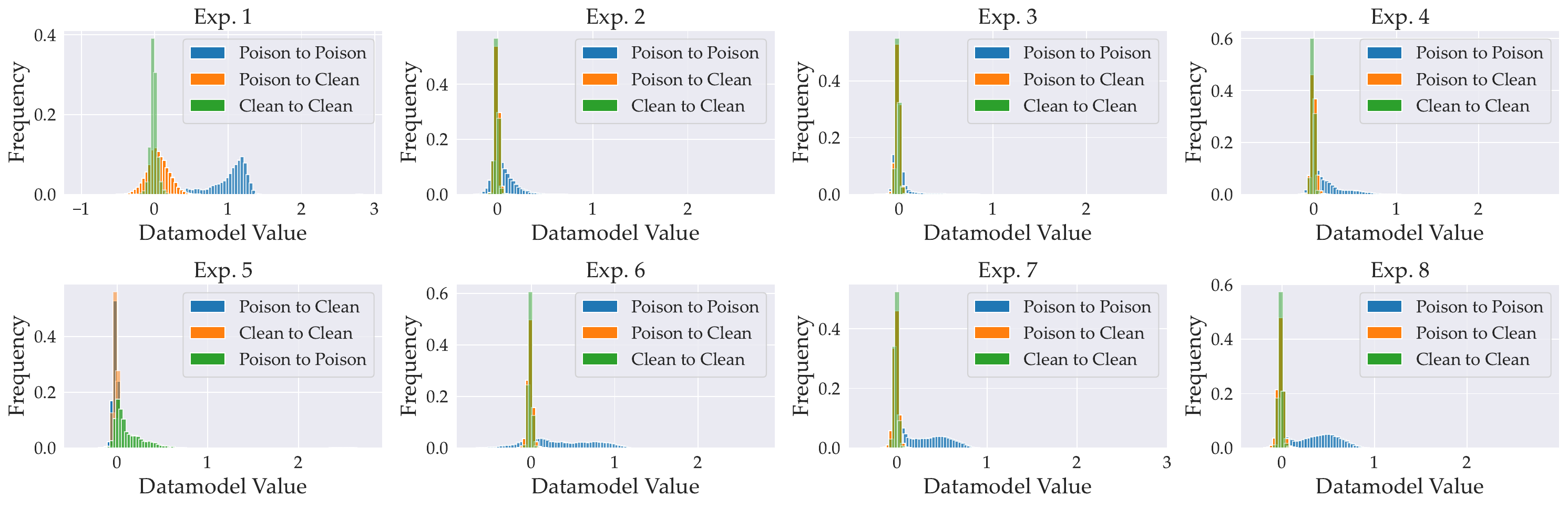}

    \caption{We plot the distribution of the datamodels weights for all the experiments. We clearly see that the effect of poisoned samples on other poisoned samples is generally higher than the effect of poisoned samples on clean samples, and than clean samples on each other.}
    \label{fig:all-dm-hist}
\end{figure}

\subsection{Attack Success Rate (ASR)}

In the main paper, we presented our results by measuring the accuracy of a model on a clean and a poisoned validation sets. Another relevant metric is the Attack Success Rate (ASR) which measures the probability that the model predicts the target class after inserting the trigger into an image. As we can see in Table~\ref{tab:results_asr}, our defense leads to a low ASR in seven out of eight setups.

\begin{table}[h]

        \centering
        \caption{We compare our method against multiple baselines in a wide range of experiments. We observe that our algorithm leads to a low ASR in seven out of eight experiments. Refer to Table~\ref{tab:attack_params} for the full experiments parameters.}
            \begin{tabular}{@{}ccccccc@{}}
                \toprule
                \textbf{Exp.}          & \textbf{No Defense}        & \textbf{AC}                & \textbf{ISPL}              & \textbf{SPECTRE}           & \textbf{SS}                & \textbf{Ours} \\ \midrule
                1 & 87.94 & 88.26 & {\bf0.70}  & 87.67 & 73.78 & {\bf0.81}         \\
                2 & 96.38 & 96.32 & {\bf0.67}  & -     & 95.40 & {\bf 1.44}          \\ \midrule
                3 & 50.49 & 51.33 & {\bf0.58}  & 55.68 & 10.44 & {\bf1.18}       \\
                4 & 99.21 & 99.02 & {\bf0.75}  & -     & 95.85 & {\bf2.30}          \\ \midrule
                5 & 15.66 & 5.35  & {\bf0.71}  & 7.66  & {\bf0.80}  & {\bf0.92}       \\
                6 & 58.57 & 45.44 & {\bf0.66}  & 46.78 & {\bf0.67}  & {\bf0.77}          \\ \midrule
                7 & 26.09 & 23.64 & 9.90   & 18.48 & 9.17  & {\bf3.56}       \\
                8 & 50.62 & 44.82 & 26.07 & 44.14 & 24.72 & {\bf3.42}         \\ \bottomrule
                \end{tabular}

    \label{tab:results_asr}
\end{table}

    \newpage

\end{document}